\documentclass[10pt,twocolumn,twoside]{IEEEtran}
\usepackage{amsmath,amssymb,euscript,yfonts,psfrag,latexsym,dsfont,graphicx,bbm,color,amstext,wasysym,pdfsync} 
\usepackage{epstopdf}

\begin{document}
\newtheorem{thm}{Theorem}
\newtheorem{conj}[thm]{Conjecture}
\newtheorem{cor}[thm]{Corollary}
\newtheorem{lemma}[thm]{Lemma}
\newtheorem{prop}[thm]{Proposition}
\newtheorem{problem}[thm]{Problem}
\newtheorem{remark}[thm]{Remark}
\newtheorem{defn}[thm]{Definition}
\newtheorem{ex}[thm]{Example}
\newcommand{\ignore}[1]{}
\newcommand{\mR}{{\mathbb R}}
\newcommand{\mN}{{\mathbb N}}
\newcommand{\mI}{{\mathbb I}}
\newcommand{\supp}{{\rm supp}}
\newcommand{\ext}{{\rm ext}}
\newcommand{\bR}{{\bold R}}
\newcommand{\rR}{{\bold R}}
\newcommand{\rG}{{\bold G}}
\newcommand{\mZ}{{\mathbb Z}}
\newcommand{\mC}{{\mathbb C}}
\newcommand{\mT}{{\mathbb T}}
\newcommand{\fR}{{\mathfrak R}}
\newcommand{\fK}{{\mathfrak K}}
\newcommand{\fG}{{\mathfrak G}}
\newcommand{\fT}{{\mathfrak T}}
\newcommand{\fC}{{\mathfrak C}}
\newcommand{\fF}{{\mathfrak F}}
\newcommand{\cF}{{\mathcal F}}
\newcommand{\fN}{{\mathfrak N}}
\newcommand{\fM}{{\mathfrak M}}
\newcommand{\mD}{{\mathbb{D}}}
\newcommand{\cU}{{\mathcal{U}}}
\newcommand{\cL}{{\mathcal{L}}}
\newcommand{\cE}{{\mathcal{E}}}
\newcommand{\cI}{{\mathcal{I}}}
\newcommand{\cJ}{{\mathcal{J}}}
\newcommand{\cD}{{\mathcal{D}}}
\newcommand{\cR}{{\mathcal{R}}}
\newcommand{\cG}{{\mathcal{G}}}
\newcommand{\bJ}{{\mathbb{J}}}
\newcommand{\bI}{{\mathbb{I}}}
\newcommand{\pC}{C_{\rm perio}((-\pi,\pi])}
\newcommand{\s}{{\rm s}}
\newcommand{\dtheta}{{\frac{d\theta}{2\pi}}}
\newcommand{\ar}{{\rm a}}
\newcommand{\intpi}{{\int_{-\pi}^\pi}}
\newcommand{\me}{{\rm _{ME}}}
\newcommand{\mr}{{\rm _{sm}}}
\def\spacingset#1{\def\baselinestretch{#1}\small\normalsize}
\newcommand{\zn}{{0:n}}
\newcommand{\bczn}{{{\bf c}_{\zn}}}
\newcommand{\bc}{{{\bf c}}}

\newcommand{\zk}{{0:k}}
\newcommand{\bczk}{{{\bf c}_{\zk}}}
\newcommand{\zkEtt}{{0:k-1}}
\newcommand{\bczkEtt}{{{\bf c}_{\zkEtt}}}

\title{Uncertainty Bounds for Spectral Estimation}
\author{Johan Karlsson and Tryphon T. Georgiou \thanks{Supported by
Swedish Research Council, G\"oran
Gustafsson Foundation, National Science Foundation  Grant No. CCF-1218388 and Grant No. 1027696, Air Force
Office of Scientific Research under Grant FA9550-12-1-0319, and the Vincentine Hermes-Luh Endowment.}  \thanks{Johan Karlsson is with the department of Electrical and Computer Engineering, University of Florida, Gainesville, Florida 32611, {\tt
jkarlsson@ufl.edu}. Tryphon T.  Georgiou is with the
Department of Electrical Engineering, University of Minnesota,
Minneapolis, Minnesota 55455, USA, {\tt tryphon@ece.umn.edu}}} \date{}
\maketitle

\spacingset{.96}

\begin{abstract} 
The purpose of this paper is to study metrics suitable for assessing uncertainty of power spectra when these are based on finite second-order statistics. The family of power spectra which is consistent with a given range of values for the estimated statistics represents the {\em uncertainty set} about the ``true'' power spectrum. Our aim is to quantify the size of this uncertainty set using suitable notions of distance, and in particular, to compute the diameter of the set since this represents an upper bound on the distance between any choice of a nominal element in the set and the ``true'' power spectrum. Since the uncertainty set may contain power spectra with lines and discontinuities, it is natural to quantify distances in the weak topology---the topology defined by continuity of moments. We provide examples of such weakly-continuous metrics and focus on particular metrics for which we can explicitly quantify spectral uncertainty. We then consider certain high resolution techniques which utilize filter-banks for pre-processing, and compute  worst-case {\em a priori} uncertainty bounds solely on the basis of the filter dynamics. This allows the {\em a priori} tuning of the filter-banks for improved resolution over selected frequency bands.
\end{abstract}

\begin{keywords} Robust spectral estimation, uncertainty set, spectral distances, geometry of spectral measures, THREE filter design.
\end{keywords}

\section{Introduction}\label{sec:introduction}

\PARstart{I}{n} practice, the estimation of power spectra in stationary time-series often relies on
second-order statistics. The premise is that these are moments of an
underlying power spectral distribution ---the true power spectrum. Thus,
the question arises as to how much is ``knowable'' about the distribution of power in the spectrum from such statistics. 

Asymptotically, as more data accrue the convergence is guaranteed in a suitable sense, but the practical question remains on how to bound the error when only limited information is available. To this end, it is important to consider how a finite set of statistics localizes the power spectrum.
Traditionally, for many applications, one relies on a particular power spectrum selected out of a variety of methods
that lead to specific choices, all consistent (in different ways) with the recorded data and the estimated
moments. Historically, Burg's algorithm and the maximum entropy spectrum, and the Pisarenko harmonic decomposition are specific such choices
\cite{Haykin,StoicaMoses} and so are the correlogram and the periodogram.
Thus, in general, there exists a large family of admissible power spectra which are all consistent.
Bounding the values of admissible spectral density functions is an ill-posed problem (see Section~\ref{sec:weakmetrics}).
Instead, the natural way to quantify power spectral uncertainty
is by bounding the power on (measurable) subsets of the frequency band. Therefore, the goal of this paper is to consider
the appropriate topology--the so-called weak topology, and to develop suitable metrics that can be used to quantify and measure
power spectral uncertainty.

Throughout, we consider stochastic processes
$\{y_t\,: \, t\in \mZ\}$ which are discrete-time, zero-mean, and second-order  stationary.
A typical set of statistics for a stationary stochastic process is a finite set of covariance (or, autocorrelation) samples.
The covariance samples
\[
c_k:=\cE\{y_t\bar{y}_{t-k}\}, \mbox{ for }k=0, \pm 1, \pm 2,\ldots,\pm n,
\]
where $\cE\{\cdot\}$ denotes the expectation operator,
provide moment constraints for the power spectrum $d\mu$ of the process:
\begin{equation}\label{moments}
c_k=\frac{1}{2\pi}\int_{-\pi}^\pi e^{-ik\theta}d\mu(\theta)\mbox{ for }k=0, \pm 1, \pm 2 \ldots, \pm n.
\end{equation}
The power spectrum is thought of as a non-negative measure on the unit circle
$\mT=\{z=e^{i\theta}\,:\,\theta\in(-\pi,\pi]\}$ (for notational simplicity also identified with the interval $(-\pi,\pi]$).
We use the symbol $\fM$ to denote the class of such measures and the problem of determining $d\mu\in\fM$
from the covariance samples (finitely or infinitely many) is known as
the trigonometric moment problem. Classical theory on this problem
originates in the work of Toeplitz and Carath\'eodory at the turn of
the 20$^{\rm th}$ century and has evolved into a rather deep chapter
of functional analysis and of operator theory
\cite{Akhiezer,KreinNudelman,Ger,foiasfrazho,BallGohbergRodman}.  The
classical monograph by Geronimus \cite{Ger} contains a wide range of
results on the trigonometric moment problem, the asymptotic behavior of
solutions, spectral factors and optimal predictors,
as well as explicit expressions for spectral envelops \cite[Theorem $5.7$]{Ger} (c.f. \cite{Capon,Haykin,Geo2001}). 
A more general form in which statistics may be available is when these represent the state covariance, or the output covariance, of a dynamical system driven by the stochastic process of interest. Such a dynamical system may represent a model of  physical processing (bandpass filtering at sensor locations, losses, structure of sensor array, etc.) or of virtual processing (software-based) of the original time-series. Either way, covariance statistics represent (generalized) moments of the power spectrum and a theory which is completely analogous to the theory of the trigonometric moment problem is available and provides similar conclusions, see \cite{ BGL1, BGL2, Geo2001,Geor00,statecov2}. In fact, the use of generalized statistics, which relates to beamspace processing, was explored in \cite{BGL2,Geor00} as a way to improve resolution in power spectral estimation over selected frequency bands. More recent work addresses spectral estimation with priors, computational issues, as well as important multivariate generalizations
\cite{Avventi2011,BlomqvistLindquistNagamune2003,FerrantePavonRamponi2008,
FerrantePavonZorzi2012,FerranteRamponiTicozzi2011,CFP,IT,
GL,GeorgiouLindquist2008,Nagamune2003,PavonFerrante2006,
RamponiFerrantePavon2009, RamponiFerrantePavon2010}.

The framework of the present work involves such moment problems specified by covariance statistics.
Invariably, moment statistics are estimated from a finite observation record and are known with
limited accuracy. Thus, in a typical experiment, as the observation
record of a time-series increases so does the accuracy and the length of the estimated partial 
covariance sequence. Our goal is to develop metrics that can be used to quantify spectral
uncertainty. More specifically, phrased in the context of the trigonometric moment problem,
we seek metrics between power spectra that have the following properties:
\begin{enumerate}
\item[(i)] given a finite set of covariance samples, the family of consistent power spectra has a finite diameter, and
\item[(ii)] the diameter of the uncertain set of power spectra shrinks to zero as both, the accuracy of the covariance samples increases and their number tends to infinity.
\end{enumerate}
The latter condition is dictated by the fact that the trigonometric
moment problem is known to be determined, i.e., there is a unique
power spectrum which is consistent with an infinite sequence of covariances. As
we will explain below (in Section~\ref{sec:main}), the proper topology which allows for these properties
to hold is the weak topology on measures (cf., \cite[page
$8$]{hoffman}). There is a variety of metrics that can be used to
metrize this topology, and thus, in principle, to quantify spectral
uncertainty. A contribution of this work is to suggest a class of
metrics for which the radius of spectral uncertainty and {\it a priori bounds}  are computable
given a finite set of (error-free) statistics.

In Section~\ref{sec:background} we review the trigonometric moment
problem and relevant concepts in functional analysis. In Section~\ref{sec:main} we define power-spectral uncertainty sets and discuss the relevance of weakly continuous metrics. In Section \ref{sec:weakmetrics} we present a collection of
weakly continuous metrics that, in different ways, are suitable for metrizing the space of power spectra.  In Section \ref{thecaseof} we compute
the diameter of uncertainty sets, for a particular choice of a metric, and elaborate on the 
limit properties of this uncertainty quantification. In Section \ref{sec:apps} we present an example that elucidates the relevance and applicability of the results
in practice. In Section \ref{sec:generalized} we explain how the framework applies  in the context of generalized statistics. In Section \ref{sec:THREEex} we highlight the use of this quantification of uncertainty in filter design ---we show
how to tune a filter-bank so as limit spectral uncertainty over some frequency range of interest. In the concluding section (Section \ref{sec:conclutions}) we summarize the results and outline possible future directions.

\section{The trigonometric moment problem, spectral representations, and weak convergence}\label{sec:background}

The covariances $c_k$, $k=0,\pm 1,\pm 2, \ldots$, of a stationary random process $\{y_t\,:\,t\in \mZ\}$ are the Fourier coefficients of the spectral measure $d\mu$ as in \eqref{moments}. These are characterized by the non-negativity of the Toeplitz matrices \cite{GrenanderSzego,Haykin}
\[\label{eq:Tn}
T_n=\left[\begin{array}{llll}
c_0&c_{-1}&\cdots&c_{-n}\\
c_1&c_0 &\cdots& c_{-n+1}\\
\vdots&\vdots &\ddots&\vdots\\
c_n&c_{n-1}&\cdots&c_0 \end{array}\right],
\]
for $n=0,1,\ldots$.  When $T_n> 0$ for $n\leq k$ and singular for $n=k+1$, then it is also singular for all $n>k$ and ${\rm rank}(T_{k+\ell})={\rm rank}(T_{k})=k+1$ for all $\ell\geq 1$.
In this case, $d\mu$ is singular with respect to the Lebesgue measure and consists of  finitely many ``spectral lines,'' equal in number to ${\rm rank}(T_{n})$ \cite[page 148]{GrenanderSzego}. Because $d\mu$ is a real measure,
$c_k=\bar{c}_{-k}$ for $k=0,\,1,\ldots$, hence we use only positive indices and refer
by
\[\bczn:=(c_0,\,c_1,\ldots,\,c_n)\]
to the vector of the first $(n+1)$ moments,  and by
\[\bc:=(c_0,\,c_1,\ldots)\] 
to the infinite sequence.
The sequence $\bc$ is said to be {\em positive} if $T_n >0$  for all $n$. Similarly $\bczn$ is said to be {\em positive} if $T_n>0$. Accordingly, the term {\em non-negative} is used when the relevant Toeplitz matrices are non-negative definite.

As noted in the introduction, the power spectrum of a discrete-time stationary process is a bounded non-negative measure on the unit circle. The derivative (of its absolutely continuous part) is referred to as the spectral density function, while the singular part typically contains jumps (spectral lines) associated with the presence of sinusoidal components. In general, the singular part may have a more complicated mathematical structure that allocates ``energy'' on a set of measure zero without the need for distinct spectral lines \cite[page 5]{GrenanderSzego}. From a mathematical viewpoint such spectra are important as they represent limits of more palatable spectra, and hence, represent a form of completion.

The natural topology where such limits ought to be considered is the so-called 
{\em weak topology}. This topology is also known as the weak$^*$ topology in functional analysis--a term which is less frequently used in the context of measures.
The weak topology is defined in terms of convergence of linear functionals and is explained next. We denote by $C(\mT)$ the class of real-valued continuous functions on
$\mT$. It is quite standard that the 
space of bounded linear functionals $\Lambda:C(\mT)\rightarrow\mR$,
can be identified with the space of bounded measures on $\mT$
\cite[page $7$]{hoffman}.  
More specifically, any bounded functional $\Lambda$ can be represented in the form
\[
\Lambda(f)=\int_\mT f(t)d\mu(t) \mbox{ for all }f\in C(\mT),
\]
with $d\mu$ being the corresponding measure--this is
the Riesz representation theorem.
Continuous functions now serve as ``test functions'' to differentiate between measures. Bounds on the corresponding integrals define the weak topology: a sequence of measures $d\mu_n$, $n=1,2,\ldots$, converges to $d\mu$ in the weak topology if $\int fd\mu_n \to \int fd\mu$ for every $f\in C(\mT)$.
Thus, for any two measures that are different, there exists a continuous function that the two measures integrate to different values.
In this setting, a measure can be specified uniquely by its Fourier coefficients. In fact,
given a positive sequence $\bc$, the unique corresponding measure $d\mu$ can be determined as the limit in the weak topology of finite Fourier sums or Cesaro means \cite[page 24]{hoffman}.

Non-negative measures are naturally associated with analytic and harmonic functions---a connection which has been exploited in classical circuit theory in the context of passivity. Herglotz' theorem \cite{Akhiezer} states that if $d\mu$ is a bounded non-negative measure on $\mT$, then
\begin{equation}\label{eq:H}
H[d\mu](z)=\frac{1}{2\pi}\int_{-\pi}^\pi \frac{e^{i\theta}+z}{e^{i\theta}-z}d\mu(\theta)
\end{equation}
is analytic in $\mD:=\{z\,:\,|z|<1\}$ and the real part is non-negative. Such functions are referred to as either ``positive-real'' or, as Carath\`{e}odory functions. Conversely, any positive-real function can be represented (modulo an imaginary constant) by the above formula for a suitable non-negative measure.
The Poisson integral of a non-negative measure $d\mu$
\begin{equation}\label{eq:P}
P[d\mu](z):=\frac{1}{2\pi}\int_{-\pi}^\pi P_r(t-\theta)d\mu(\theta), \quad z=re^{it},
\end{equation}
where $P_r(\theta)=\frac{1-r^2}{|1-re^{i\theta}|^2}$ is the Poisson kernel, is a harmonic function which is non-negative in $\mD$ and is equal to the real part of $H[d\mu](z)$. 
Given either a positive-real function $H(z)$, or its real part $P(z)$, the measure $d\mu$  such that $H(z)=H[d\mu](z)$ and $P(z)=P[d\mu](z)$ is uniquely determined by the limit of $P(re^{i\theta})d\theta\rightarrow d\mu$ as $r\rightarrow 1$ in the weak topology \cite[page 33]{hoffman}.
Thus, power spectra are, in a very precise sense, boundary limits of the (harmonic) real parts of positive-real functions.
 
\section{Uncertainty of spectral estimates}\label{sec:main}
We postulate a situation where covariances $\bczn$ are estimated from sample of a stochastic process $\{y_t\}_{t\in\mZ}$ with power spectrum $d\nu$, and where the estimation error in the entries of $\bczn$ are bounded by $\epsilon$\footnote{The more realistic situation, where the confidence intervals degrade with the order of covariance lags, can be dealt with in a similar manner, albeit with a bit more cumbersome notation.}. 
Thus, the ``true'' spectrum $d\nu$ belongs to the uncertainty set
\[
\cF_{\bczn,\epsilon}:=\left\{d\mu\ge 0: \left|c_k-\int_{-\pi}^\pi \!e^{-ik\theta} d\mu\right|<\epsilon, k=0,1,\dots,n\right\}.
\]
Likewise, any choice for a ``nominal'' spectrum $d\hat\nu$ consistent with our assumptions will also belong to $\cF_{\bczn,\epsilon}$.
Therefore, the distance between the two will be bounded by
the diameter of the uncertainty set,
\[
\rho_{\delta}(\cF_{\bczn,\epsilon}):=\sup\{\delta(d\mu_0,d\mu_1)\,:\, d\mu_0,d\mu_1 \in \cF_{\bczn,\epsilon}\},
\]
where $\delta$ is a suitable metric at hand.
Thus, our goal in this paper is to seek metrics $\delta$ on the space of positive measures $\fM$ that provide a meaningful and computationally tractable notion of a diameter for $\cF_{\bczn,\epsilon}$ thereby quantifying modeling uncertainty in the spectral domain.
To narrow down the search for suitable metrics, consider the scenario when the length of the data increases, and hence the accuracy as well as the number of covariance lags increases. In the limit, as the estimation error goes to zero and the number $n$ of covariance lags goes to infinity, the uncertainty set shrinks to the singleton 
\[
\{d\nu\}=\bigcap_{n\in\mN}\cF_{\bczn,\epsilon_n}, \quad(\epsilon_n\to 0 \mbox{ as } n\to\infty).
\]  
This is due to the fact that an infinite limit sequence $\bc$ defines a unique power spectrum--the trigonometric problem is determinate. The diameter should reflect this shrinkage to a singleton and tend to zero.
For this to happen, the underlying metric needs to be weakly continuous as stated next.

\begin{thm}\label{thm:weakstardistance}
Let $\delta$ be a metric on $\fM$.
Then
\begin{equation}\label{eq:weaklimit}
\rho_\delta(\cF_{\bczn,\epsilon_n} ) \to 0 \mbox{ as } \epsilon_n\to 0 \mbox{ and } n\to\infty,
\end{equation}
for every covariance sequence $\bc$ if and only if $\delta$ is weakly continuous.
\end{thm}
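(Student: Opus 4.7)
The plan is to establish the two implications separately, leveraging weak-$*$ compactness of norm-bounded measures (Banach--Alaoglu / Helly) together with the determinacy of the trigonometric moment problem recalled in Section~\ref{sec:background}: an infinite positive sequence $\bc$ identifies a unique $d\nu \in \fM$. Throughout, the zeroth-moment constraint $|c_0 - \mu(\mT)| < \epsilon_n$ furnishes a uniform total-variation bound on every member of $\cF_{\bczn,\epsilon_n}$, which is what drives the compactness arguments.

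For sufficiency ($\Leftarrow$), assume $\delta$ is weakly continuous on $\fM \times \fM$ and argue by contradiction. If $\rho_\delta(\cF_{\bczn,\epsilon_n}) \not\to 0$, one can pick $\eta > 0$ and pairs $d\mu_n^{(0)}, d\mu_n^{(1)} \in \cF_{\bczn,\epsilon_n}$ with $\delta(d\mu_n^{(0)}, d\mu_n^{(1)}) \geq \eta$. The uniform total-variation bound gives weakly convergent subsequences with limits $d\mu^{(0)}, d\mu^{(1)} \in \fM$. Passing to the limit in the moment inequalities, using $e^{-ik\theta} \in C(\mT)$ and $\epsilon_n \to 0$, both limits share the full sequence $\bc$ of moments, so by determinacy they coincide with $d\nu$. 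Weak continuity then yields $\delta(d\mu_{n_j}^{(0)}, d\mu_{n_j}^{(1)}) \to \delta(d\nu, d\nu) = 0$, contradicting $\eta$.

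For necessity ($\Rightarrow$), suppose the diameter property holds for every positive $\bc$ and let $d\mu_m \to d\mu$ weakly. By the reverse triangle inequality, it suffices to show $\delta(d\mu_m, d\mu) \to 0$. Writing $c_k^{(m)}$ and $c_k$ for the respective Fourier coefficients, weak convergence gives $c_k^{(m)} \to c_k$ for each fixed $k$. A diagonal procedure then produces $n_m \to \infty$ and $\epsilon_m \to 0$ with both $d\mu$ and $d\mu_m$ in $\cF_{\bc_{0:n_m},\epsilon_m}$: pick strictly increasing $M_N$ such that $\max_{k \leq N}|c_k^{(m)} - c_k| < 1/N$ for all $m \geq M_N$, and for $M_N \leq m < M_{N+1}$ set $n_m := N$ and $\epsilon_m := 1/N + 1/m$. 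Then $\delta(d\mu_m, d\mu) \leq \rho_\delta(\cF_{\bc_{0:n_m},\epsilon_m}) \to 0$ by hypothesis, completing the proof.

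The main obstacle I anticipate is the diagonal construction: pointwise convergence of moments supplies no uniform rate in $k$, so $n_m$ must be chosen to grow slowly enough that all of the first $n_m+1$ moments of $d\mu_m$ stay within the shrinking tolerance $\epsilon_m$. The sufficiency direction is comparatively routine once the compactness bound extracted from $c_0$ is noted, and the determinacy step was already invoked in Section~\ref{sec:background} when motivating why weak continuity is the correct topological requirement in the first place.
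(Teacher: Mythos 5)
Your proof is correct, and it takes a more direct route than the paper does. The paper's proof establishes that the moment-based neighbourhood basis $\fF(d\nu)=\{\cF_{\bczn,\epsilon}\}$ and the test-function basis $\fN(d\nu)$ generate the same topology on $\fM$, invoking along the way an auxiliary metric that metrizes the weak topology, and then reads the theorem off as a consequence of that topological equivalence; the final implication ``$\delta$ continuous in the $\fF$-topology $\iff$ eq.\ \eqref{eq:weaklimit}'' is left to the reader. You instead prove the two implications of the if-and-only-if directly, relying only on Banach--Alaoglu sequential compactness (valid here because the $k=0$ constraint bounds total variation and $C(\mT)$ is separable) together with determinacy of the trigonometric moment problem. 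Your sufficiency argument is essentially the same compactness-plus-determinacy subsequence extraction that the paper uses in the harder containment of the bases, just phrased as a contradiction about diameters rather than as a statement about neighbourhoods. Your necessity argument, with its explicit diagonal construction of $(n_m,\epsilon_m)$ placing both $d\mu$ and $d\mu_m$ in a common shrinking uncertainty set, fills in precisely the step the paper compresses to a single ``hence,'' which is a genuine gain in completeness. The one point worth stating explicitly, which you gesture at with the reverse triangle inequality, is that showing $\delta(d\mu_m,d\mu)\to 0$ whenever $d\mu_m\to d\mu$ weakly yields full joint sequential weak continuity of $\delta$ by the triangle inequality, and that sequential continuity is the right notion here because the weak-$*$ topology is metrizable on the TV-bounded sets where all the action takes place.
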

\begin{proof}
This can be seen by comparing the definition of $\cF_{\bczn,\epsilon_n}$ with the definition of open sets in the weak topology. See the appendix for a detailed proof.
\end{proof}

\begin{remark} Occasionally one may have additional a priori knowledge on the structure and smoothness of the power spectrum which would further limit the uncertainty set.
Quantifying such ``structured'' uncertainty would necessarily be problem-specific and is not considered in the present work. Instead, we take a viewpoint that allows
comparing power spectra in a unified way, regardless smoothness, presence of spectral lines, or membership in a specific class of models. \hfill $\Box$
\end{remark}

We now consider the case where the finite covariance sample $\bczn$ is known exactly. If $\bczn$ is  positive, then the uncertainty set
\[
{\cal F}_\bczn :=\left\{d\mu\ge 0: c_k=\int_{-\pi}^\pi \!e^{-ik\theta} d\mu, k=0,1,\dots,n\right\}
\]
contains infinitely many power spectra. If $\bczn$ is only non-negative, and hence $T_n$ is singular, then
the family ${\cal F}_\bczn$ consists of the single power spectrum $d\nu$ \cite[page 148]{GrenanderSzego}.
The following two results are immediate corollaries of Theorem~\ref{thm:weakstardistance}. The first one treats the case where the number of covariance lags goes to infinity, while the second, treats the case where the values of the covariance lags tend to those of a singular sequence. In both cases the diameter of the uncertainty set necessarily goes to zero for a weakly continuous metric.
\begin{cor} \label{prp:limit1}
Let $\bc$ be a non-negative sequence and let  $\delta$ be a weakly continuous metric. Then
\[ \rho_\delta(\cF_\bczn)\to 0, \mbox{ as } n\to \infty.
\]
\end{cor}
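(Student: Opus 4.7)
The plan is to derive this corollary as an immediate specialization of Theorem~\ref{thm:weakstardistance}. The key observation is that the exact-moment uncertainty set $\cF_\bczn$ is nested inside every $\epsilon$-relaxed set $\cF_{\bczn,\epsilon}$, so an upper bound on the diameter of the relaxation descends automatically to $\cF_\bczn$.

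First I would verify that the hypotheses of Theorem~\ref{thm:weakstardistance} are met. Since $\bc$ is non-negative, each Toeplitz matrix $T_n$ is positive semidefinite, and by the classical Toeplitz--Carath\'eodory solvability of the trigonometric moment problem there exists at least one $d\mu\in\fM$ realizing the whole sequence $\bc$; in particular $\bc$ is a bona fide covariance sequence and each $\cF_\bczn$ is non-empty. From the defining inequalities one reads off
\[
\cF_\bczn \;\subseteq\; \cF_{\bczn,\epsilon}\quad\text{for every }\epsilon>0,
\]
since exact matching $c_k=\int e^{-ik\theta}d\mu$ trivially satisfies $\bigl|c_k-\int e^{-ik\theta}d\mu\bigr|=0<\epsilon$. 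Taking the $\delta$-diameter preserves this inclusion, so $\rho_\delta(\cF_\bczn)\le\rho_\delta(\cF_{\bczn,\epsilon})$ for every $\epsilon>0$.

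To finish, I would choose any sequence $\epsilon_n\downarrow 0$ and invoke Theorem~\ref{thm:weakstardistance}: weak continuity of $\delta$ yields $\rho_\delta(\cF_{\bczn,\epsilon_n})\to 0$ as $n\to\infty$, and the inclusion above then forces $\rho_\delta(\cF_\bczn)\to 0$. There is no substantive obstacle here---the whole purpose of Theorem~\ref{thm:weakstardistance} is to make corollaries of this type automatic; the only mild subtlety is confirming that the non-negativity hypothesis places $\bc$ within the class of sequences the theorem covers, which is precisely what the setup step above checks.
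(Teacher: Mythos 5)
Your proposal is correct and follows essentially the same route as the paper: the inclusion $\cF_\bczn \subset \cF_{\bczn,\epsilon}$ for every $\epsilon>0$ combined with a direct application of Theorem~\ref{thm:weakstardistance}. The paper additionally remarks that the result can alternatively be obtained from Geronimus together with Proposition~\ref{prp:eqlimits}, but the primary argument is the one you give.
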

\begin{proof} This follows directly from Theorem~\ref{thm:weakstardistance} and by noting that 
\[
\cF_{\bczn}\subset\cF_{\bczn,\epsilon}
\]
for any $\epsilon>0$.
It also follows from \cite[\S 1.16]{Ger} in view of
Proposition \ref{prp:eqlimits} in Section~\ref{ssec:poisson} below.
\end{proof}

\begin{cor}  \label{prp:limit2}
Let ${\bc}_{0:n}$ be a vector of covariance lags such that the corresponding $T_n$ is a singular Toeplitz matrix, and let
$\hat{\bc}_{0:n}(k)$ ($k=1,2,\ldots$) be a sequence of vectors of covariance lags tending to ${\bc}_{0:n}$. If 
$\delta$ is a weakly continuous metric then
\[\rho_\delta(\cF_{\hat{\bc}_{0:n}(k)})\to 0, \mbox{ as } k\to\infty.
\]
\end{cor}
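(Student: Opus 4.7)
The plan rests on the observation that, by \cite[page 148]{GrenanderSzego}, a singular Toeplitz matrix $T_n$ uniquely determines the power spectrum, so $\cF_{\bc_{0:n}}=\{d\nu\}$ is a singleton. It therefore suffices to show that every measure in $\cF_{\hat{\bc}_{0:n}(k)}$ is weakly close to $d\nu$ for $k$ large, after which the weak continuity of $\delta$ forces the diameter to vanish. The overall strategy is a compactness-plus-contradiction argument.

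First I would reduce to an $\epsilon$-relaxation of the exact moment problem. Setting $\epsilon_k:=\max_{0\le j\le n}|\hat{c}_j(k)-c_j|+\tfrac{1}{k}$, the convergence $\hat{\bc}_{0:n}(k)\to\bc_{0:n}$ gives $\epsilon_k\to 0$, and any $d\mu\in\cF_{\hat{\bc}_{0:n}(k)}$ satisfies $|c_j-\int e^{-ij\theta}d\mu|=|c_j-\hat{c}_j(k)|<\epsilon_k$ for $j=0,\ldots,n$, so $\cF_{\hat{\bc}_{0:n}(k)}\subset\cF_{\bc_{0:n},\epsilon_k}$. It is thus enough to prove $\rho_\delta(\cF_{\bc_{0:n},\epsilon_k})\to 0$ as $k\to\infty$. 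Suppose for contradiction that, along a subsequence, the diameter is bounded below by some $\eta>0$; then choose pairs $d\mu_k,d\mu_k'\in\cF_{\bc_{0:n},\epsilon_k}$ with $\delta(d\mu_k,d\mu_k')\ge \eta$.

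Each such measure has total mass at most $c_0+\epsilon_k$, so both sequences are norm-bounded in the dual of $C(\mT)$. Since $\mT$ is compact, the Banach--Alaoglu theorem (equivalently, Helly's selection theorem on $\mT$) yields weakly convergent subsubsequences $d\mu_k\to d\mu_\infty$ and $d\mu_k'\to d\mu_\infty'$ in $\fM$. Because each $e^{-ij\theta}$ lies in $C(\mT)$, weak convergence passes through the integrals, and letting $\epsilon_k\to 0$ in $|c_j-\int e^{-ij\theta}d\mu_k|<\epsilon_k$ gives $\int e^{-ij\theta}d\mu_\infty=c_j$ for $j=0,\ldots,n$, with the analogous statement for $d\mu_\infty'$. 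Hence $d\mu_\infty,d\mu_\infty'\in\cF_{\bc_{0:n}}=\{d\nu\}$, so both equal $d\nu$. Weak continuity of $\delta$ then yields $\delta(d\mu_k,d\mu_k')\to\delta(d\nu,d\nu)=0$, contradicting the lower bound $\eta$ and proving the corollary.

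The main obstacle I anticipate is not technical but structural: Theorem~\ref{thm:weakstardistance} concerns a fixed infinite covariance sequence with $n\to\infty$, whereas here $n$ is fixed and the entire moment vector varies, so that theorem cannot be invoked as a black box. The substance of the proof is therefore the Helly/Banach--Alaoglu extraction, together with the cushion $+\tfrac1k$ in $\epsilon_k$ that converts the non-strict inequality $|c_j-\hat c_j(k)|\le\epsilon_k-\tfrac1k$ into the strict inequality required by the definition of $\cF_{\bc_{0:n},\epsilon_k}$. Everything else is a routine consequence of weak continuity of $\delta$ and determinacy of the singular moment problem.
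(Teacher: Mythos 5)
Your proof is correct, and your diagnosis of the cited reduction is apt. The paper's own proof of this corollary is a one-liner — ``Follows directly from Theorem~\ref{thm:weakstardistance}. See also \cite{KG} for an independent detailed argument'' — but, as you observe, the statement of Theorem~\ref{thm:weakstardistance} concerns a \emph{fixed} sequence $\bc$ with both $\epsilon_n\to 0$ and $n\to\infty$, whereas here $n$ is held fixed and only the moment vector varies; invoking it as a black box would implicitly require the extra fact that near a singular $T_n$ the admissible higher-order moments collapse, which itself needs an argument. What you supply instead is a self-contained compactness-and-contradiction proof: embed $\cF_{\hat\bc_{0:n}(k)}$ into $\cF_{\bc_{0:n},\epsilon_k}$ with $\epsilon_k\to 0$, extract weak$^*$ limits via Banach--Alaoglu, pass the (finitely many) moment constraints to the limit using that $e^{-ij\theta}\in C(\mT)$, identify both limits with the unique $d\nu$ determined by the singular $T_n$, and use weak continuity of $\delta$ (together with the triangle inequality) to force the diameter down. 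This is not a different technique so much as a re-deployment of the same Banach--Alaoglu-plus-determinacy engine that the paper uses to prove Theorem~\ref{thm:weakstardistance}, applied directly to the corollary rather than routed through the theorem's statement. The net effect is that your argument is arguably more transparent than the paper's terse citation: it makes explicit where the singularity of $T_n$ (hence $\cF_{\bczn}=\{d\nu\}$ by \cite[page 148]{GrenanderSzego}) enters, and the $+\tfrac1k$ cushion cleanly converts the non-strict bound into the strict inequality required by the definition of $\cF_{\bc_{0:n},\epsilon}$.
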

\begin{proof}
Follows directly from Theorem \ref{thm:weakstardistance}. See also \cite{KG} for an independent detailed argument.
\end{proof}

\begin{remark}\label{rm:TV}
It should noted that the total variation ($\int|d\mu_0-d\mu_1|$) is not weakly continuous and therefore the conclusions of the two corollaries would fail if this was used as the metric. To see this, note that if $\bczn$ is positive, then $\cF_\bczn$ contains infinitely many measures and among them at 
least two singular measures with non-overlapping support, i.e., $\supp(d\mu_0)\cap\supp(d\mu_1)=\emptyset$ (e.g., see \cite{KreinNudelman}). Then the total variation of their difference is always $2c_0$. \hfill $\Box$
\end{remark}

\section{Weakly continuous metrics}\label{sec:weakmetrics}
In general, a finite set of second-order statistics cannot dictate the precise value of the power spectrum locally. Indeed, given any finite positive sequence  $\bczn$ and any $\theta_0\in (-\pi,\pi]$, then for any value $\alpha\ge 0$ there exists  an $\epsilon>0$ and an absolutely continuous measure $d\mu=fd\theta \in \cF_\bczn$ such that 
\[
f(\theta)=\alpha\mbox{ for } \theta\in(\theta_0-\epsilon,\theta_0+\epsilon).
\]
What can be said instead, is that the range of values
\begin{equation}\label{eq:local}
\left\{\int_\mT gd\mu\,:\, d\mu\in \cF_\bczn\right\},
\end{equation}
for any particular test function $g\in C(\mT)$, is bounded.
Furthermore, as $n\to \infty$, this range tends to zero.
In fact, due to weak continuity, 
the range of values tend to zero for any of the scenarios in Theorem \ref{thm:weakstardistance} and its two  corollaries. Finding the maximum and the minimum of \eqref{eq:local} is a linear programming problem on an infinite dimensional domain. 
Provided $g$ is symmetric real and the covariance sequence $\bczn$ is real, the dual problems, which give the lower and upper bounds of \eqref{eq:local}, are
\begin{align}
&\max \left\{\lambda \bczn^T \,:\, \sum_{k=0}^n \lambda_k \cos(k\theta)\le g(\theta) , \theta\in(-\pi,\pi]\right\},\label{eq:LinEq1}\\
&\min \left\{\lambda \bczn^T \,:\,  g(\theta) \le\sum_{k=0}^n \lambda_k \cos(k\theta), \theta\in(-\pi,\pi]\right\},\label{eq:LinEq2}
\end{align}
where $\lambda=(\lambda_0,\,\lambda_1,\, \ldots,\,\lambda_n)$ are Lagrange multipliers. 

\begin{remark}
Along these lines Lang and Marzetta in \cite{LangMarzetta1983, MarzettaLang1984} sought to quantify the maximal and minimal spectral mass in a specified interval given the covariances $\bczn$. To this end we may take $g=\chi_I$ the characteristic function of an interval $I$, that is, $\chi_I(\theta)=1$ if $\theta\in I$ and $0$ otherwise. Lower and upper bounds on $\int_I d\mu$ are finite and are then given by \eqref{eq:LinEq1} and \eqref{eq:LinEq2}, respectively. However, since $g=\chi_I$ is not continuous, the mass in an interval is not a weakly continuous quantity, and the requirements in  Corollary~\ref{prp:limit1} does not hold. In fact, for this case the gap between the upper and lower bound does not necessarily converge to zero as $n$ goes to infinity. This occurs, e.g., in the case when the true spectrum has a spectral line at an end point of the interval. \hfill $\Box$
\end{remark}

 A class of weakly continuous metrics can be sought in the form
\begin{equation}\label{eq:locmetric}
\delta(d\mu_0, d\mu_1)=\sup_{\xi\in K}\left|\int_\mT g_\xi(d\mu_0-d\mu_1)\right|,
\end{equation}
for  $\{g_\xi\}_{\xi\in K}\subset C(\mT)$, provided the family $\{g_\xi\}_{\xi\in K}$ of test functions is sufficiently rich to distinguish between measures and yet, small enough so that continuity is ensured. The precise conditions are given next. 

\begin{prop}\label{prp:locmetric}
The functional $\delta(d\mu_0, d\mu_1)$ defined in \eqref{eq:locmetric} is a weakly continuous metric if and only if the following two conditions hold:
\begin{enumerate}
\item[(a)] for any two  measures $d\mu_0,d\mu_1\in \fM$, there is a $\xi\in K$ such that $\int_\mT g_\xi d\mu_0\neq\int_\mT g_\xi d\mu_1$, and
\item[(b)] the set $\{g_\xi\}_{\xi\in K}$ in $C(\mT)$ is equicontinuous\footnote{A family of functions $\{g_\xi\}_{\xi\in K}\subset C(\mT)$ is said to be equicontinuous if for any $\epsilon$ there exists a $\gamma$ such that $|g_\xi(\theta_1)-g_\xi(\theta_2)|<\epsilon$ if $|\theta_1-\theta_2|<\gamma$ for all $\theta_1,\theta_2\in\mT,$ and $\xi\in K$.} and uniformly bounded.
\end{enumerate}
\end{prop}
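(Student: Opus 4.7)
The plan is to split the argument into sufficiency and necessity, handling each of (a) and (b) separately. Symmetry and the triangle inequality for $\delta$ are immediate from the definition, so the metric axioms reduce to showing $\delta(d\mu_0,d\mu_1)=0$ implies $d\mu_0=d\mu_1$ (which is exactly what (a) buys us) together with finiteness of $\delta$ (which follows from uniform boundedness in (b), since then $\delta(d\mu_0,d\mu_1)\le(\sup_\xi\|g_\xi\|_\infty)(|d\mu_0|+|d\mu_1|)(\mT)$).

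For sufficiency of weak continuity, I would first reduce to showing that $d\mu_n\to d\mu$ weakly implies $\delta(d\mu_n,d\mu)\to 0$, since joint continuity in both arguments then follows from the triangle inequality. Writing $d\nu_n=d\mu_n-d\mu$, the task is to show $\sup_{\xi\in K}|\int_\mT g_\xi d\nu_n|\to 0$. The uniform boundedness principle applied to the linear functionals induced on $C(\mT)$ yields a total variation bound $|d\nu_n|(\mT)\le M$. Since $\{g_\xi\}_{\xi\in K}$ is equicontinuous and uniformly bounded on the compact set $\mT$, Arzel\`a--Ascoli gives that its closure in $C(\mT)$ is compact and hence totally bounded. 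Given $\varepsilon>0$, pick a finite $\varepsilon/(3M)$-net $g_{\xi_1},\ldots,g_{\xi_L}$ in sup norm. Weak convergence supplies an $N$ such that $|\int g_{\xi_j} d\nu_n|<\varepsilon/3$ for all $j$ and all $n>N$; for any $\xi\in K$, choose $j$ with $\|g_\xi-g_{\xi_j}\|_\infty<\varepsilon/(3M)$ to conclude $|\int g_\xi d\nu_n|<\varepsilon$ uniformly in $\xi$.

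For necessity of (a), if no $g_\xi$ distinguishes some pair $d\mu_0\ne d\mu_1$ then $\delta(d\mu_0,d\mu_1)=0$, so $\delta$ is not a metric. For necessity of (b), I would produce contradictions with weak continuity via unit point masses. If uniform boundedness fails, pick $\xi_n\in K$ and $\theta_n\in\mT$ with $|g_{\xi_n}(\theta_n)|\to\infty$, and extract a subsequence with $\theta_n\to\theta^*$ using compactness of $\mT$; since each $f\in C(\mT)$ is continuous, the unit mass at $\theta_n$ converges weakly to the unit mass at $\theta^*$, yet the $\delta$-distance from the zero measure to the former is at least $|g_{\xi_n}(\theta_n)|\to\infty$, contradicting weak continuity together with the finiteness of the $\delta$-distance from zero to the latter (which is a consequence of $\delta$ being a finite-valued metric). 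If equicontinuity fails, there is $\varepsilon>0$ and sequences $\xi_n,\theta_{1,n},\theta_{2,n}$ with $|\theta_{1,n}-\theta_{2,n}|\to 0$ and $|g_{\xi_n}(\theta_{1,n})-g_{\xi_n}(\theta_{2,n})|\ge\varepsilon$; passing to a common subsequential limit $\theta^*$, both point masses tend weakly to the mass at $\theta^*$, so weak continuity forces their mutual $\delta$-distance to zero, contradicting the lower bound $\ge\varepsilon$.

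The main obstacle is the sufficiency direction, specifically obtaining uniform-in-$\xi$ control of $|\int g_\xi d\nu_n|$. The substance there is Arzel\`a--Ascoli plus a total-boundedness $\varepsilon$-net reduction, and the delicate point is ensuring $|d\nu_n|(\mT)$ is a priori bounded so that the $\varepsilon/(3M)$ calibration actually controls the tail across the net. Everything else is either a direct reading of the definition of $\delta$ or a routine weak-convergence construction using point masses on the compact space $\mT$.
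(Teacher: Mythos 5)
Your proposal is correct and takes essentially the same approach as the paper: positivity of the metric is equivalent to (a), Arzel\`a--Ascoli supplies the compactness used in the sufficiency direction, and necessity of (b) is established by testing against unit point masses. The only stylistic divergences are that you run the sufficiency direction as a direct total-boundedness/$\varepsilon$-net argument rather than extracting a uniformly convergent subsequence of the $g_{\xi_k}$, and for unboundedness you pass to a convergent subsequence of the $\theta_n$ and use finiteness of the metric, whereas the paper scales the point mass by $1/k$ so that it converges weakly to zero while $\delta(\tfrac{1}{k}d\mu_k,0)>1$; the scaling trick is marginally cleaner since it avoids invoking finiteness, but both are sound.
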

\begin{proof}
See the appendix.  
\end{proof}
In essence, condition (a) ensures positivity while condition (b) ensures weak continuity.
The triangle inequality and symmetry always hold for such $\delta$. The total variation norm is an example of why (b) is needed---it is a norm  of the form \eqref{eq:locmetric} where the set of test function are  the $C(\mT)$ unit ball, $\{g: \|g\|_\infty\le 1\}$, but it is not weakly continuous. 
This is due to the fact that the unit ball in  $C(\mT)$ is not equicontinuous.

\begin{remark}
A more general family of distances are of the form
\[
\delta(d\mu_0,d\mu_1)=\sup_{\scriptsize\begin{array}{c}g_0(\theta)\in K_0, g_1(\phi)\in K_1,\\ g_0(\theta)+g_1(\phi)\in K\end{array}} \int_\mT g_0d\mu_0+\int_\mT g_1d\mu_1
\]
where $K_0, K_1\subset C(\mT)$ and $K\subset C(\mT\times\mT)$. By selecting the sets $K_0, K_1$, and $K$ properly, $\delta$ (or a monotone function of $\delta$) will be a weakly continuous metric. One such example is the metrics based on optimal transportation treated in \cite{GKT}, where the metrics have non-local properties such as geodesics which preserve lumpedness. \hfill $\Box$
\end{remark}

Next we consider three ways for devising weakly continuous metrics. The first uses smoothing of power spectra to be compared by suitable test functions in a way that is analogous to the use of classical window kernels in periodogram estimation \cite{StoicaMoses}. The second is based on Monge-Kantorovich optimal mass transportation where a cost is associated with mismatch in the frequency range where power resides. In this geometry, optimal-transport geodesics may be used to model slow time-varying drift in the spectral power of non-stationary time-series \cite{GKT} --- such models for non-stationarity lessen artifacts present when using ordinary interpolation (e.g., fade-in fade-out \cite{Jiang2}).
  The third is based on  Poisson kernels and is more suitable for differentiating spectra based on their content on specified frequency bands. The connection between Poisson kernels and the analytic and harmonic functions in \eqref{eq:H} and \eqref{eq:P} allows for evaluating bounds and the diameter of the uncertainty set with respect to the corresponding distances. This will be explored in the case where finitely many error-free covariances are known in Sections \ref{thecaseof} to \ref{sec:THREEex}.

\subsection{Metrics based on smoothing}

A simple way to devise weakly continuous metrics which has a classical flavor is to
first smoothing the measures via convolution with a fixed suitable
continuous function, and then to compare the smoothed spectral densities. This echoes the use of windowing Fourier techniques in the time domain \cite{StoicaMoses} where a  suitable choice of a window is used to trade-off resolution and variance of the estimator. Likewise here, the choice of a windowing function determines the resolution of the metric.  

Thus, let $g\in C(\mT)$ be such a windowing function, and define
\[
\delta_{{\rm smooth},g}(d\mu_0,d\mu_1):=\|g\ast(d\mu_0-d\mu_1)\|_{\infty}.
\] 
Here,
\[
(g\ast d\mu)(\xi)=\int_{-\pi}^{\pi} g(\xi-\theta)d\mu(\theta)
\]
denotes the circular convolution and $\|\cdot\|_\infty$ the $L_\infty$
norm. In the view of Proposition~\ref{prp:locmetric}, $\delta_{{\rm smooth},g}$ is of the form
\[
\|g\ast(d\mu_0-d\mu_1)\|_{\infty}=\sup_{\xi\in(-\pi,\pi]}\left|\int_{-\pi}^\pi g(\xi-\theta)(d\mu_0(\theta)-d\mu_1(\theta))\right|,
\]
and hence, condition (b) of the proposition holds. In addition, the chosen convolution-kernel functions must not have any zero Fourier coefficient, otherwise the approach will fail to differentiate between certain measures. To see this, let
$g(\theta)=\sum_{k=-\infty}^\infty g_k e^{ik\theta}$ and let $(\dots,
a_{-1},a_0,,a_1,\ldots)$ be the Fourier coefficients of
$d\mu_0(\theta)-d\mu_1(\theta)$, then
\[
g\ast (d\mu_0-d\mu_1)(\xi)=\sum_{k=-\infty}^\infty g_{-k} a_ke^{ik\xi}.
\]
If $g_k\neq 0$ for all $k\in\mZ$, the above expression cannot vanish identically unless all the $a_k$'s are zero, in which case 
$d\mu_0=d\mu_1$. In this case (a) holds and it follows from Proposition~\ref{prp:locmetric} that
$\delta_{{\rm smooth},g}(d\mu_0,d\mu_1)$ is a weakly continuous metric. This leads to the next proposition.
\begin{prop}
Let $g\in C(\mT)$ be a windowing function with non-vanishing Fourier coefficients. Then 
$\delta_{{\rm smooth},g}(d\mu_0,d\mu_1)$ is a weakly continuous metric. 
\end{prop}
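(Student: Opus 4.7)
The plan is to reduce to Proposition~\ref{prp:locmetric} by identifying the appropriate family of test functions. Set $K = \mT$ and, for each $\xi \in \mT$, define $g_\xi(\theta) := g(\xi - \theta)$. Under this identification the convolution can be written as
\[
\delta_{{\rm smooth},g}(d\mu_0,d\mu_1) = \sup_{\xi \in \mT} \left| \int_\mT g_\xi \, (d\mu_0 - d\mu_1) \right|,
\]
which matches the form \eqref{eq:locmetric}. It then suffices to verify the two hypotheses of Proposition~\ref{prp:locmetric}.

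Condition (b) is the easy one. Uniform boundedness follows from $\|g_\xi\|_\infty = \|g\|_\infty$ for every $\xi$. Equicontinuity follows from uniform continuity of $g$ on the compact circle $\mT$: given $\epsilon > 0$, choose $\gamma > 0$ so that $|g(s_1) - g(s_2)| < \epsilon$ whenever $|s_1 - s_2| < \gamma$, and then $|g_\xi(\theta_1) - g_\xi(\theta_2)| = |g(\xi - \theta_1) - g(\xi - \theta_2)| < \epsilon$ uniformly in $\xi$ whenever $|\theta_1 - \theta_2| < \gamma$.

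For condition (a) I would argue by contraposition. Suppose $\delta_{{\rm smooth},g}(d\mu_0,d\mu_1) = 0$, so the continuous function $h := g \ast (d\mu_0 - d\mu_1)$ vanishes identically on $\mT$. Letting $g_k$ and $a_k$ denote the $k$th Fourier coefficients of $g$ and of the (finite signed) measure $d\mu_0 - d\mu_1$, a Fubini interchange gives $\hat{h}(k) = g_{-k} a_k$ for all $k \in \mZ$, as foreshadowed in the paragraph preceding the proposition. Since $h \equiv 0$ forces every $\hat{h}(k) = 0$, and since $g_{-k} \neq 0$ by hypothesis, we conclude that $a_k = 0$ for all $k$. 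The determinacy of the trigonometric moment problem (the fact that the Fourier coefficients uniquely determine a finite measure on $\mT$, as recalled in Section~\ref{sec:background}) then yields $d\mu_0 = d\mu_1$. Hence (a) holds and Proposition~\ref{prp:locmetric} delivers the conclusion.

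The only step requiring genuine care is the Fubini interchange that identifies $\hat{h}(k)$ with $g_{-k} a_k$; this is routine because $g$ is continuous and $d\mu_0 - d\mu_1$ has finite total variation, but it is the one place where the two hypotheses on $g$ (continuity for condition (b), non-vanishing Fourier coefficients for condition (a)) meet the structure of the convolution. Everything else is a bookkeeping exercise in invoking Proposition~\ref{prp:locmetric}.
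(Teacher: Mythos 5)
Your proof is correct and follows essentially the same route as the paper: identify the test-function family $g_\xi(\theta)=g(\xi-\theta)$ indexed by $\xi\in\mT$, verify condition (b) of Proposition~\ref{prp:locmetric} from the uniform boundedness and uniform continuity of $g$, and verify condition (a) via the Fourier-coefficient identity $\widehat{g\ast(d\mu_0-d\mu_1)}(k)=g_{-k}a_k$ together with the hypothesis that no $g_k$ vanishes. The paper presents precisely this argument in the paragraph preceding the proposition, with the same contrapositive step for (a); you simply spell out the equicontinuity check and the Fubini interchange that the paper leaves implicit. One minor terminological remark: what you invoke at the end is not really the determinacy of the trigonometric moment problem (a statement about positive measures), but the more elementary uniqueness theorem for Fourier--Stieltjes coefficients of finite signed measures, which follows from density of trigonometric polynomials in $C(\mT)$; the paper appeals to the same fact implicitly when it concludes ``in which case $d\mu_0=d\mu_1$.''
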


\subsection{Metrics based on optimal transportation}
A rapidly growing literature \cite{Villani} on a classical problem, known as the Monge-Kantorovich transportation problem, has impacted a wide range of disciplines, from probability theory to fluid dynamics and economy \cite{RachevRuschendorf1998}.  Optimal transportation refers to the correspondence between distributions of masses that induce the least amount of transportation cost\footnote{L. Kantorovich received the $1975$ Nobel Prize for the impact of this theory on allocation of economic resources.}. The optimal transportation cost between two probability distributions induces weakly continuous metrics, known as Wasserstein metrics, which are extensively used in probability theory. In order to handle more general distributions we need a suitable modification to compare unequal masses. This we do next and connect with the formalism in \eqref{eq:locmetric}.

The Monge-Kantorovich transportation problem amounts to minimizing the cost of
transportation between two distributions of equal mass, e.g., $d\mu_0$
and $d\mu_1$ where $\int_\mT d\mu_0=\int_\mT d\mu_1$. In this, a
transportation plan $d\pi(\theta,\phi)$ is sought which corresponds to
a non-negative distribution on $\mT\times \mT$ and is such that
\begin{equation}\label{eq:transportpi}
\int_{\theta \in \mT} d\pi(\theta,\phi)=d\mu_0(\phi) \mbox{ and } \int_{\phi\in \mT} d\pi(\theta,\phi)=d\mu_1(\theta).
\end{equation}
Then, the minimal cost
\[
\min\left\{ \int_{\mT\times \mT} |\theta-\phi|d\pi(\theta,\phi)\,:\, d\pi \mbox{ satisfies } \eqref{eq:transportpi}\right\}
\]
is the Wasserstein-1 distance between $d\mu_0$ and $d\mu_1$, and is a
weakly continuous metric (see, e.g., \cite[chapter $7$]{Villani}).
This problem admits a dual formulation, known as the Kantorovich
duality:
\[
W_1(d\mu_0,d\mu_1)=\max_{\|g\|_L\le 1} \int g(d\mu_0-d\mu_1),
\]
where  $\|f\|_L=\sup_{\theta,\phi}\frac{|f(\theta)-f(\phi)|}{|\theta-\phi|}$ denotes the Lipschitz norm.

Power spectra, in general, cannot be expected to have the same total mass. In this case, $\delta_{1,\kappa}(d\mu_0,d\mu_1)$ defined by
\begin{equation}\label{eq:tildeT}
\inf_{\int d\nu_0=\int d\nu_1}\hspace*{-5pt}W_1(d\nu_0,d\nu_1)+\kappa\sum_{i=0}^1\int_\mT |d\mu_i-d\nu_i|,
\end{equation}
is a weakly continuous metric for an arbitrary but fixed
$\kappa>0$. The interpretation is that $d\mu_0$ and $d\mu_1$ are
perturbations of the two underlying measures $d\nu_0$ and $d\nu_1$, respectively, which have equal mass. Then, the cost of transporting $d\mu_0$ and $d\mu_1$
to one another can be thought of as the cost of transporting $d\nu_0$
and $d\nu_1$, to one another, plus the size of their respective
perturbations from $d\mu_0$ and $d\mu_1$.
This is introduced in \cite{GKT} and this metric admits a dual formulation
\[
\delta_{1,\kappa}(d\mu_0,d\mu_1)=\max_{\begin{array}{c}\|g\|_\infty\le
\kappa\\\|g\|_L\le 1
\end{array}} \int g(d\mu_0-d\mu_1),
\]
which is in the form of the Proposition~\ref{prp:locmetric}.
Various other generalizations of the transportation distance that apply to power spectra are also being proposed and studied in  \cite{GKT}.

\subsection{Metrics based on the Poisson kernel}\label{ssec:poisson}
Power spectra are weak limits of the real part of analytic functions on the unit disc, as indicated earlier. Comparison of these functions induces weakly continuous metrics which readily fall under the framework of \eqref{eq:locmetric}.  
Interestingly, this approach allows for both the computation of explicit/analytic bounds on uncertainty sets (see Section \ref{thecaseof}) and for specifying a  frequency dependent resolution of a metric (see Remark~\ref{rm:frdep} and the example in Section~\ref{sec:generalized}). 

Recall from Section~\ref{sec:background} that the harmonic function associated with a measure is the Poisson integral, defined as 
\[
P[d\mu](z)=\frac{1}{2\pi}\int_{-\pi}^\pi P_r(t-\theta)d\mu(\theta), \quad z=re^{it}.
\]
Weak convergence of measures is equivalent
to certain types of convergence of their harmonic counterpart.
\begin{prop} \label{prp:eqlimits}
Let $\{d\mu_k\}_{k=1}^\infty$ be a sequence of uniformly bounded signed measures on $\mT$, let $d\mu$ be a bounded measure on $\mT$, and let $u(z)=P[d\mu](z)$, $u_k(z)=P[d\mu_k](z)$ be their corresponding Poisson integrals.
The following statements are equivalent:
\begin{enumerate}
\item[(a)] $d\mu_k\rightarrow d\mu$ weakly,
\item[(b)] $u_k(z) \rightarrow u(z)$ pointwise $\forall z\in \mD$,
\item[(c)] $u_k(z) \rightarrow u(z)$ in $L_1(\mD)$,
\item[(d)] $u_k(z) \rightarrow u(z)$ uniformly on every compact subset of $\mD$.
\end{enumerate}
\end{prop}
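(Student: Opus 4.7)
The plan is to prove the cycle (a) $\Rightarrow$ (b) $\Rightarrow$ (d) $\Rightarrow$ (c) $\Rightarrow$ (b), and then close it with (b) $\Rightarrow$ (a). Throughout, let $M$ be a uniform bound on $|\mu_k|(\mT)$ and $|\mu|(\mT)$, which controls both the size of the $u_k$'s inside the disc and their boundary integrals.

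The implication (a) $\Rightarrow$ (b) is immediate from the definition of weak convergence: for each fixed $z=re^{it}\in\mD$, the map $\theta\mapsto P_r(t-\theta)$ lies in $C(\mT)$, so integrating against $d\mu_k$ converges to integrating against $d\mu$. For (b) $\Rightarrow$ (d), the pointwise bound $|u_k(z)|\leq M(1+|z|)/(1-|z|)$ makes $\{u_k\}$ uniformly bounded on every compact $K\subset\mD$; harmonicity on a slightly larger compact neighborhood, combined with the Poisson representation on small balls, yields uniform gradient bounds on $K$ and hence equicontinuity of $\{u_k\}$. Arzel\`a--Ascoli together with the pointwise convergence from (b) then pins $u$ down as the unique subsequential uniform limit on $K$, giving (d).

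For (d) $\Rightarrow$ (c), split $\mD$ into a compact part $\{|z|\leq 1-\delta\}$ and a thin annulus. Uniform convergence handles the compact part, while Fubini and $\int_0^{2\pi}P_r(t-\theta)\,dt=2\pi$ yield the uniform estimate $\int_0^{2\pi}|u_k(re^{it})|\,dt\leq M$, so the annular contribution to $\|u_k-u\|_{L_1(\mD)}$ is $O(\delta)$ uniformly in $k$. The reverse step (c) $\Rightarrow$ (b) uses the mean-value property: for any $z_0\in\mD$ and any ball $B(z_0,r)\subset\mD$ one has $u_k(z_0)=(\pi r^2)^{-1}\int_{B(z_0,r)}u_k\,dA$, and the $L_1$-convergence transfers to the point value. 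These steps together establish (b) $\Leftrightarrow$ (c) $\Leftrightarrow$ (d).

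The main obstacle is (b) $\Rightarrow$ (a), where one must convert pointwise information on $\mD$ into boundary integrals against arbitrary continuous test functions. The key device is the Poisson kernel expansion $P_r(\phi)=\sum_{n\in\mZ}r^{|n|}e^{in\phi}$, which gives $\int_0^{2\pi}u_k(re^{it})e^{-int}\,dt=r^{|n|}\hat\mu_k(n)$ with $\hat\mu_k(n)=\int e^{-in\theta}d\mu_k(\theta)$. Fixing any $r\in(0,1)$, the uniform bound $|u_k(re^{it})|\leq M(1+r)/(1-r)$ permits passage to the limit inside the integral via dominated convergence, yielding $\hat\mu_k(n)\to\hat\mu(n)$ for every $n\in\mZ$. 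Density of trigonometric polynomials in $C(\mT)$, together with the uniform bound on $|\mu_k|(\mT)$, then upgrades moment convergence to $\int f\,d\mu_k\to\int f\,d\mu$ for all $f\in C(\mT)$, which is (a).
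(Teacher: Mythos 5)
Your proof is correct, but it traverses a different cycle from the paper's and closes the loop with a different key device. The paper goes (a)$\Rightarrow$(b)$\Rightarrow$(c)$\Rightarrow$(d)$\Rightarrow$(a): after (a)$\Rightarrow$(b) it obtains $L_1(\mD)$ convergence directly from dominated convergence on circles, then upgrades to uniform convergence on compacta via the mean value property, and finally closes (d)$\Rightarrow$(a) by a Fubini identity, namely $\int f(t)\,u_k(re^{it})\,dt = \int P[f\,dt](re^{i\theta})\,d\mu_k(\theta)$, together with the uniform boundary convergence $P[f\,dt](re^{i\theta})\to f(\theta)$, which gives an estimate \emph{independent} of the measure. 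You instead prove (b)$\Rightarrow$(d) via Arzel\`a--Ascoli (using interior gradient bounds for harmonic functions to get equicontinuity), then (d)$\Rightarrow$(c)$\Rightarrow$(b) to complete the equivalence of (b),(c),(d), and finally go back to (a) from (b) alone: you read off the Fourier coefficients of $\mu_k$ from a single circle via $\int_0^{2\pi} u_k(re^{it})e^{-int}\,dt = r^{|n|}\hat\mu_k(n)$, pass to the limit by dominated convergence, and upgrade moment convergence to weak convergence by density of trigonometric polynomials in $C(\mT)$ together with the uniform total-variation bound. Your closing step is the genuinely distinct contribution: it is a more ``Fourier-analytic'' and arguably more elementary route than the paper's Fubini argument, and it needs only the weakest statement (b) rather than the strongest (d) as a hypothesis; the trade-off is that your (b)$\Rightarrow$(d) step invokes Arzel\`a--Ascoli, which is heavier machinery than the paper's direct dominated-convergence argument for (b)$\Rightarrow$(c). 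Both approaches are sound, and the overall equivalence is established either way.
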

\begin{proof}
The proof is given in the appendix.
\end{proof}
Each of the statements $(b), (c),$ and $(d)$ may be used for devising weakly continuous metrics. We shall focus on the statement $(d)$, indicating that weakly continuous metrics can be constructed by comparing the harmonic functions on a subset of $\mD$.  
In fact, the maximal distance between the harmonic functions on a closed non-finite set $K\subset\mD$  
gives rise to a weakly continuous
metric
\begin{equation}\label{distanceK}
\delta_{K}(d\mu_0,d\mu_1)=\max_{z\in K}|P(d\mu_0-d\mu_1)(z)|.
\end{equation}
This is true, since the resulting family of the Poisson kernels
satisfies the properties in Proposition~\ref{prp:locmetric}. To see this, first note that any two harmonic functions which coincides on $K$, a closed non-finite set inside $\mD$, must be identical, hence $(a)$ is satisfied. Further more, since $K\subset \gamma \mD$ for some $\gamma<1$, the magnitude and derivative of $P_r(t-\theta)$ is uniformly bounded when $re^{it}\in K$, hence 
$(b)$ holds.

\begin{remark} \label{rm:frdep} In practice, it is often the case that 
one is interested in comparing spectra over selected frequency bands.
To this end, various schemes have been considered which rely on
pre-processing with a choice of ``weighting'' filters and filter banks
(see e.g., \cite{Capon}, \cite{PPVaidyanathan}, and \cite{BGL1,statecov2}). The choice of the point-set $K$ in \eqref{distanceK} can
be used to dictate the resolution of the metric over such frequency
bands. To see how this can be done, consider $K$ to designate an arc
$\{\xi=re^{i\theta}\,:\,\theta\in[\theta_0-\epsilon,\theta_0+\epsilon]\}$. This
satisfies the conditions of Proposition~\ref{prp:locmetric} and thus,
$\delta_K$ is a weakly continuous metric. At the same time, the values
$P[d\mu](\xi)$, with $\xi\in K$, represent the variance at the output
of a filter with transfer function $z/(z-\xi)$. These are bandpass
filters with a center frequency $\arg(\xi)$ and bandwidth which
depends on the choice of $r$. Thus, in essence,
the metric compares the respective
variance after the spectra have been weighted by a continuum (for
$\xi\in K$) of such frequency-selective bank of filters. \hfill $\Box$
\end{remark}

\section{The size of the uncertainty set}\label{thecaseof}
The diameter of the uncertainty set with respect to the distance $\delta_K$ turns out to be especially easy to compute -- it is realized as the distance between 
two ``diametrically opposite'' measures with only $n+1$ spectral lines each
(i.e., measures having compact support). This is the content of the following proposition.

\begin{prop} \label{prp:supKnorm}
Let ${\bf c}_{0:n}$ be a positive covariance sequence and let $K\subset \mD$ be closed. Then 
\begin{align*}
&\rho_{\delta_K}(\cF_{{\bf c}_{0:n}})=\\ &\max_{z\in K}\left\{2\left(\left|\frac{\frac{2}{1-z\bar{z}}+\langle b_z,d_z\rangle_{T^{-1}}}{\langle b_z,b_z\rangle_{T^{-1}}}\right|^2-\frac{\langle d_z,d_z\rangle_{T^{-1}}}{\langle b_z,b_z\rangle_{T^{-1}}}\right)^{\frac{1}{2}}\right\},
\end{align*}
where
\[
b_z=\!\left(\!\begin{array}{l}
z^{-1}\\
z^{-2}\\
\vdots\\
z^{-n-1}
\end{array}\!\!\!\right),\,
d_z= \!\left(\!\begin{array}{l}
z^{-1}(c_0)\\
z^{-2}(c_0 +2c_1z)\\
\vdots\\
z^{-n-1}(c_0 +2c_1z+\cdots+2c_nz^n)
\end{array}\!\!\!\right),
\]
and $\langle x,y\rangle_{T^{-1}}$ denotes the inner product 
\[\langle x,y\rangle_{T^{-1}} := y^* T_n^{-1} x.
\]
Furthermore,
$\rho_{\delta_K}(\cF_{{\bf c}_{0:n}})$ is attained as the distance
between two elements of $\cF_{{\bf c}_{0:n}}$ which are both singular
with support containing at most $n+1$ points.
\end{prop}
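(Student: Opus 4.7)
My plan is to reduce the diameter calculation to a pointwise optimization in $z\in K$, identify the range of Herglotz values at each such $z$ as a closed disk in $\mC$, and read off the diameter from the geometry of that disk. First I would interchange the outer supremum over measure pairs with the maximum over $z\in K$: for every $d\mu_0,d\mu_1\in\cF_\bczn$ the map $z\mapsto P(d\mu_0-d\mu_1)(z)$ is continuous on the compact $K$, while $\cF_\bczn$ is weakly compact (Helly's selection theorem applied to positive measures of total mass $c_0$), so extremal pairs of measures exist at each fixed $z$. Choosing $z^*$ to attain the outer maximum and then optimizing the measures at $z^*$ gives
\[
\rho_{\delta_K}(\cF_\bczn)=\max_{z\in K}\,\sup_{d\mu_0,d\mu_1\in\cF_\bczn}\bigl|P(d\mu_0-d\mu_1)(z)\bigr|.
\]

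For fixed $z\in\mD$, the affinity of $d\mu\mapsto H[d\mu](z)$ makes $R(z):=\{H[d\mu](z):d\mu\in\cF_\bczn\}$ a convex subset of $\mC$. A classical result of the trigonometric moment problem (see \cite{Akhiezer,KreinNudelman,Ger}, going back to Schur, Carath\'eodory and Nevanlinna) asserts that $R(z)$ is in fact a closed disk whose center and radius depend only on $\bczn$ and $z$, degenerating to a point iff $\det T_n=0$, a case excluded here by positivity of $\bczn$. Since $P[d\mu](z)=\Re H[d\mu](z)$, the projection of $R(z)$ onto the real axis is an interval of width $2\rho(z)$, where $\rho(z)$ denotes the radius of $R(z)$; hence the inner supremum above equals $2\rho(z)$. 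Moreover, by the standard extreme-point theory for trigonometric moment problems, every point on the boundary of $R(z)$ is attained by an extreme point of $\cF_\bczn$, and such extreme points are singular atomic measures supported on at most $n+1$ points. Picking the pair of measures realizing opposite ends of the real projection at the optimal $z^*$ then proves the last assertion of the proposition.

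The main work is the explicit computation of $\rho(z)$. I would carry this out via the dual linear programs \eqref{eq:LinEq1}--\eqref{eq:LinEq2} with the real test function $g(\theta)=P_r(t-\theta)$, $z=re^{it}$. The Fej\'er-Riesz factorization converts the non-negativity constraint on the Lagrangian trigonometric polynomial into a quadratic form whose unique solution involves $T_n^{-1}$ applied to two natural vectors: $b_z$, which collects the negative powers $z^{-k}$ that arise on expanding the Poisson kernel, and $d_z$, which packages the partial sums $c_0+2c_1z+\cdots+2c_{k-1}z^{k-1}$ of the known Herglotz series. The residual Fourier mass of $P_r(t-\theta)$ on frequencies $|k|>n$ produces the additive term $\tfrac{2}{1-z\bar z}$, and after algebraic bookkeeping the center and squared radius of $R(z)$ collapse into the expressions $A:=\bigl(\tfrac{2}{1-z\bar z}+\langle b_z,d_z\rangle_{T^{-1}}\bigr)/\langle b_z,b_z\rangle_{T^{-1}}$ and $|A|^2-B$ with $B:=\langle d_z,d_z\rangle_{T^{-1}}/\langle b_z,b_z\rangle_{T^{-1}}$. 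The main obstacle is precisely this bookkeeping: matching the Lagrange multipliers to the two $\langle\cdot,\cdot\rangle_{T^{-1}}$ inner products exactly, and verifying, by a Cauchy--Schwarz argument in that inner product together with positivity of $T_n$, that $|A|^2-B\ge 0$ so that $\rho(z)=\sqrt{|A|^2-B}$ is a real number, consistent with the asserted formula.
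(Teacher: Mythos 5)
Your overall strategy matches the paper's: fix $z$, identify $R(z):=\{H[d\mu](z):d\mu\in\cF_\bczn\}$ as a (Weyl) disk, read off the diameter as twice its radius, and then take the maximum over $z\in K$. Where you diverge from the paper is in how you propose to compute the disk's center and radius. The paper invokes the Kovalishina--Potapov characterization: $w_z$ is attainable as $H[d\mu](z)$ iff the augmented $(n+2)\times(n+2)$ Pick matrix built from $T_n$, $b_z$, $d_z$, and $w_z$ is nonnegative; a Schur complement and completing the square then give the inequality $|w_z-A'|\le\sqrt{|A|^2-B}$ directly, with the equality case (singular Pick matrix) delivering both the radius formula and the $n+1$-atom claim in one stroke. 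You instead propose to compute the width of the real projection of $R(z)$ via the linear programs \eqref{eq:LinEq1}--\eqref{eq:LinEq2} with $g=P_r(t-\theta)$ and a Fej\'er--Riesz factorization. That alternative route is plausible in principle --- it is a spectral-envelope computation of the Geronimus type --- but it is considerably more laborious, and you explicitly defer the "algebraic bookkeeping," which is exactly the content to be proved. As submitted, the proposal is an outline rather than a proof of the identity.

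There is also a small but genuine imprecision in your argument for the final sentence. You justify the "$n+1$ points" claim by appealing to extreme points of $\cF_\bczn$, but extreme points of $\cF_\bczn$ generically have up to $2n+1$ atoms (the paper itself notes this in a later remark). What is actually needed is the sharper fact that a measure realizing a boundary point of $R(z)$ is unique, because the corresponding Pick matrix \eqref{eq:pickmatrix} is singular; such a measure corresponds to a degenerate (singular $T_{n+1}$) extension of $\bczn$ and therefore has exactly $n+1$ atoms. The paper gets this from the equality case of \eqref{eq:discw} together with \cite[Prop.~2]{Geor00}. Your conclusion is correct, but the mechanism you invoke does not by itself give the sharp bound of $n+1$.
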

\begin{proof}
The proof is given in the appendix.
\end{proof}

Both claims in Proposition \ref{prp:supKnorm} can be used separately
for computing $\rho_{\delta_K}(\cF_{{\bf c}_{0:n}})$. The first one
suggests finding a maximum of a real-valued function over $K$. The
second claim suggests a search for a maximum of
$\delta_K(d\mu_1,d\mu_2)$ over a rather small subset of
$\ext(\cF_\bczn)$, namely nonnegative sequences $\bc_{0:(n+1)}$
parametrized by $c_{n+1}$; i.e., solutions of the quadratic equation
\begin{equation}\label{eq:T0}
\det(T_{n+1})=0.
\end{equation}
The (complex) values for $c_{n+1}$ satisfying \eqref{eq:T0}, lie on a circle in the complex plane, and hence, computation of $\rho_{\delta_K}(\cF_{{\bf
c}_{0:n}})$ requires search on a torus (each of the two extremal
$d\mu_1$, $d\mu_2$ where the diameter is attained can be thought of as
points on the circle).

We elucidate this with an example.
Figure \ref{fig:ex3} shows $\rho_{\delta_K}(\cF_\bczn)$ for
\[
\bc_{0:2} =(1,\;c_1,\;c_2)
\]
as a function of the corresponding {\em partial 
autocorrelation coefficients}, also known as {\em Schur parameters} (see the appendix),
\begin{eqnarray*}
-1< &\gamma_1:=c_1&< 1,\\
-1\leq &\gamma_2:= \frac{\det\left( \begin{array}{cc} c_1 & c_2\\ 1 & c_1 \end{array}\right)}
{\det\left( \begin{array}{cc} 1& c_1\\ \bar{c}_1 & 1 \end{array}\right)} &\leq 1,
\end{eqnarray*}
and $K$ is taken as $\{z\;:\;|z|\leq 0.5\}\subset \mD$.

The plot confirms that the diameter decreases to zero as the
parameters or, alternatively, the covariances $c_1$ and $c_2$, tend to
the boundary of the ``positive'' region (which in the Schur
coordinates corresponds to the unit square). 
However, it is interesting to note that the diameter of $\cF_\bczn$ as a function of $\bczn$ has several local maxima. 
This maximal diameter may be explicitly calculated, hence provides an {\it a priori} bound on the uncertainty. 
\begin{thm}\label{thm:COVbound}
Let $r=\max(|z|\,:\, z\in K)$.
Then
\begin{equation}\label{eq:covbound}
\rho_{\delta_K}(\cF_\bczn)\le \frac{4c_0|r|^{n+1}}{1-|r|^2}. 
\end{equation}
Further, \eqref{eq:covbound} holds with equality if and only if $\bczn=(c_0, c_0\bar \alpha, c_0\bar \alpha^2,\ldots, c_0\bar \alpha^n )$ for some $\alpha\in K$ with 
 $|\alpha|=r$.
\end{thm}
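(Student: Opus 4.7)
The plan is to convert the theorem to an approximation-theoretic statement about the Poisson kernel and then invoke classical Chebyshev theory. From the definition of $\delta_K$,
\[
\rho_{\delta_K}(\cF_\bczn) = \sup_{z\in K}\sup_{d\mu_0,d\mu_1\in\cF_\bczn}|P[d\mu_0-d\mu_1](z)|.
\]
Fix $z = se^{it}$ with $s := |z| \le r$. Because $d\mu_0 - d\mu_1$ has vanishing Fourier coefficients up to order $n$, for any real trigonometric polynomial $q$ of degree at most $n$,
\[
P[d\mu_0-d\mu_1](z) = \frac{1}{2\pi}\int_{-\pi}^\pi\bigl(P_s(t-\theta)-q(\theta)\bigr)\bigl(d\mu_0-d\mu_1\bigr)(\theta).
\]
Using $d\mu_i \ge 0$ with $\int d\mu_i = 2\pi c_0$, I obtain $|P[d\mu_0-d\mu_1](z)| \le 2c_0\,\|P_s(t-\cdot)-q\|_\infty$; taking the infimum over $q$ yields $|P[d\mu_0-d\mu_1](z)| \le 2c_0\,E_n(P_s)$, where $E_n(P_s)$ denotes the best degree-$n$ trigonometric $L^\infty$-approximation error of $P_s$.

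The core analytic identity I would then establish is
\[
E_n(P_s) = \frac{2s^{n+1}}{1-s^2}.
\]
Since $P_s$ is even, approximation by degree-$n$ trigonometric polynomials is equivalent (via $x = \cos\phi$) to approximation of $f(x) = \frac{(1-s^2)/(2s)}{a-x}$ on $[-1,1]$ by algebraic polynomials of degree $n$, where $a = (1+s^2)/(2s) > 1$. For $1/(a-x)$ with a single simple pole outside $[-1,1]$, classical Chebyshev theory (or an explicit Remes computation with equi-oscillation on the $n+2$ points consisting of $x=\pm 1$ and the $n$ interior critical points) gives best-approximation error $\frac{4s^{n+2}}{(1-s^2)^2}$, with $s = a - \sqrt{a^2-1}$ playing the role of the Zolotarev-type parameter; pulling back to $P_s$ produces the claimed identity. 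Combined with monotonicity of $s\mapsto s^{n+1}/(1-s^2)$ on $(0,1)$, this gives
\[
\rho_{\delta_K}(\cF_\bczn) \le \frac{4c_0 r^{n+1}}{1-r^2}.
\]

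For the equality characterization, I would show that both inequalities above are simultaneously tight if and only if $\bczn = (c_0, c_0\bar\alpha, \ldots, c_0\bar\alpha^n)$ with $\alpha \in K$ and $|\alpha| = r$. When $\bczn$ has this form, evaluating at $z = \alpha$ makes the monotonicity step sharp, and by Proposition~\ref{prp:supKnorm} the diameter at $z=\alpha$ is realized by two singular measures $d\mu_\pm \in \cF_\bczn$ each with at most $n+1$ spectral lines; a direct computation exploiting the AR(1) structure shows that these spectral lines fall precisely on the equi-oscillation points of the optimal Chebyshev kernel at $z=\alpha$, saturating the approximation bound and yielding $|P[d\mu_+ - d\mu_-](\alpha)| = \frac{4c_0 r^{n+1}}{1-r^2}$. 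For the converse, uniqueness of the Chebyshev approximant together with the support constraints imposed by $\cF_\bczn$ forces strict inequality unless the moment sequence has exactly this AR(1) structure. The main obstacle will be establishing the Chebyshev identity and carefully matching the equi-oscillation set of the optimal kernel with the support of the extremal measures from Proposition~\ref{prp:supKnorm}; the only-if half of the equality statement is the most delicate step.
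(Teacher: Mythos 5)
Your route is genuinely different from the paper's and, as far as the upper bound goes, it is correct.

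The paper works through the Nevanlinna--Pick/Schur parametrization. Fixing $\alpha$, the set of values $\{H[d\mu](\alpha):d\mu\in\cF_\bczn\}$ is a disc parametrized by a M\"obius transform in the free Schur parameter $s_{n+1}(\alpha)\in\overline\mD$, and the radius $r_k$ of this disc obeys the contraction $r_k\le r_{k-1}|\alpha|$ along the Schur recursion, with equality precisely when the $k$-th Schur parameter equals $-v_{k-1}$. Running the recursion, the extremal radius $r_n=2c_0|\alpha|^{n+1}/(1-|\alpha|^2)$ corresponds to the Schur sequence $(\bar\alpha,0,\ldots,0)$, i.e., the AR(1) covariances $\bczn=(c_0,c_0\bar\alpha,\ldots,c_0\bar\alpha^n)$; the "only if" direction is an automatic by-product of the recursion. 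Your route is a duality/approximation argument instead: you subtract an arbitrary degree-$n$ trigonometric polynomial from the Poisson kernel (permissible because $d\mu_0-d\mu_1$ annihilates such polynomials), bound by $2c_0\,E_n(P_s)$ using $|d\mu_0-d\mu_1|(\mT)\le 4\pi c_0$, and then invoke the classical best-approximation identity $E_n(P_s)=\tfrac{2s^{n+1}}{1-s^2}$ for the Poisson kernel. That identity is indeed correct (it reduces, as you say, via $x=\cos\phi$ to Chebyshev's best approximation of $1/(a-x)$ on $[-1,1]$, with $E_n(1/(a-x))=s^n/(a^2-1)$ for $a=(s+s^{-1})/2$; e.g., $n=0$ gives $1/(a^2-1)$, and I checked $n=1$ agrees), so the inequality \eqref{eq:covbound} follows. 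The upside of your approach is that it is self-contained modulo the Chebyshev identity and gives an appealing interpretation of the bound as the worst-case approximation error of the Poisson kernel; the downside is that the Chebyshev identity itself needs a proof or a precise citation, and it is a step the paper avoids entirely.

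Where your proposal is genuinely incomplete is the equality characterization, and you flag this yourself. To close it you need three simultaneous saturations: $|z|=r$; $\mathrm{supp}(d\mu_0)$ and $\mathrm{supp}(d\mu_1)$ disjoint and contained in the $(2n+2)$-point equioscillation set of $P_r(t-\cdot)-q^*$, with $d\mu_0$ on the $n{+}1$ points where the error is $+E_n$ and $d\mu_1$ on the $n{+}1$ points where it is $-E_n$; and both measures reproducing the prescribed $\bczn$. The last item pins down the masses via a Vandermonde system, and nonnegativity of those masses then singles out a one-parameter family of covariance vectors. Showing this family is exactly $(c_0,c_0\bar\alpha,\ldots,c_0\bar\alpha^n)$ with $\alpha=re^{it}$, and that no other $\bczn$ can work (because the Chebyshev approximant and hence its equioscillation set is unique, so any saturation must use those points), is doable but is substantially more bookkeeping than the recursion in the paper, which yields "if and only if" for free at each step. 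So: the argument for \eqref{eq:covbound} is sound, but the forward and converse halves of the equality statement are currently only a plan, not a proof.
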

\begin{proof}
The proof is given in the appendix.
\end{proof}
 
\begin{figure}
\centering
\includegraphics[scale=0.45]{./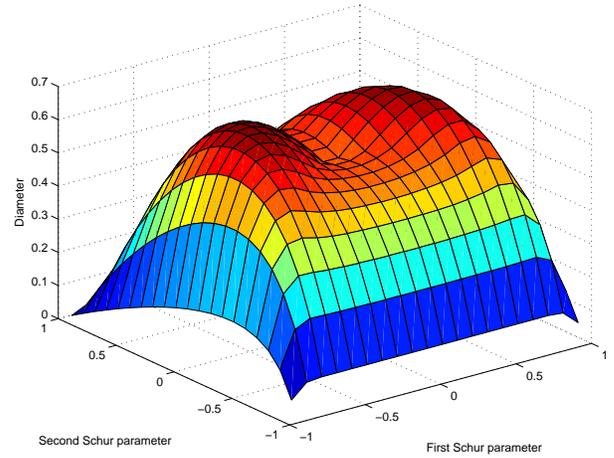}
\caption{{The uncertainty diameter $\rho_{\delta_K}$ as a function of $\gamma_1,\gamma_2$ when $c_0=1$ and $K=\{z:|z|\le 0.5\}.$}}
\label{fig:ex3}
\end{figure}

\begin{remark}
Computation of the diameter $\rho_\delta(\cF_\bczn)$ of the
uncertainty set amounts to solving the infinite-dimensional
optimization problem
\begin{equation}\label{eq:convexinf}
\sup \{\delta(d\mu_1,d\mu_2)\,:\,d\mu_1,d\mu_2\in\cF_\bczn\}.
\end{equation}
If $\delta$ is a weakly continuous and jointly convex function, then
the diameter is attained as the precise distance between two elements
which are extreme points $\cF_\bczn$. Extreme points are the points
with the property that they themselves are not a convex combination of
other elements in the set; the set of extreme points is denoted by $\ext(\cdot)$.  Then, $d\mu \in
\ext(\cF_\bczn)$ if and only if $d\mu\in\cF_\bczn$ and the support of
$d\mu$ consists of at most $2n+1$ points (see \cite{KG}).  Thus,
$\ext(\cF_\bczn)$ admits a finite dimensional characterization and
\eqref{eq:convexinf} reduces to a finite dimensional problem.\hfill $\Box$
\end{remark}

\section{Identification in a weak sense}\label{sec:apps}
In this section we elucidate how the uncertainty set is affected by the number of moments and show that spectra may be close in the weak sense even though they are qualitatively very different.

Consider the stochastic process 
\[
y_t=\cos(0.5t+\varphi_1)+\cos(t+\varphi_2)+w_t+\frac{1}{3}w_{t-1}
\]
where $w_t$ is a white noise process and $\varphi_1,\varphi_2$ are
random variables with uniform distribution on $(-\pi,\pi]$. The power
spectrum $d\nu$ is depicted in Figure~\ref{fig:truespec} and the
spectrum has both an absolutely continuous part as well as a singular
part. We would like to identify this spectrum relying on covariance data and derive 
bounds on the estimation error. We will use the
metric $\delta_{K}$ where $K=\{z\,:\,|z|=0.9\}$, i.e.,
\[
\delta_K(d\mu_0,d\mu_1)=\sup_{|z|= 0.9}|P(d\mu_0-d\mu_1)(z)|.
\]

\begin{figure}
\centering
\includegraphics[scale=0.45]{./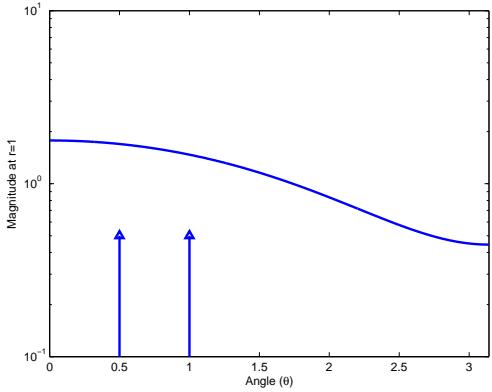}
\caption{{The ``true'' power spectrum $d\nu$.}}
\label{fig:truespec}
\end{figure}

Let $\bc$ be the covariance sequence of $d\nu$ and let $d\mu_5$ and
$d\mu_{20}$ be the power spectra with highest entropy in the sets
$\cF_{\bc_{0:5}}$ and $\cF_{\bc_{0:20}}$,
respectively. Figure~\ref{fig:deg5} compares $d\mu_5$ and
$d\nu$ where the estimation error and the uncertainty diameter are
\[ 
\delta_K(d\nu,d\mu_5)=5.66,\quad \rho_{\delta_K}(\cF_{{\bf c}_{0:5}})=20.79. 
\]
The first subplot shows and compares these two power spectra.
The second subplot
displays  $P[d\nu](0.9e^{i\theta})$, $P[d\mu_5](0.9e^{i\theta})$,
along with bounds on $P[d\mu](0.9e^{i\theta})$ when $\mu\in \cF_{\bc_{0:5}}$. It is seen that the
spectrum $d\mu_5$ does not distinguish the two peaks. 
In order to distinguish the two spectral lines, the information in $\bc_{0:5}$
is clearly not sufficient as the $\delta_K$-bounds are  substantial.

\begin{figure}
\centering
\includegraphics[scale=0.45]{./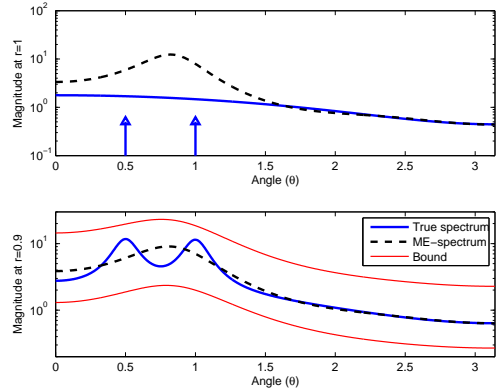}
\caption{{Subplot 1: The power spectrum $d\nu$ (solid),  $d\mu_5$ (dashed). 
Subplot~2:  $P[d\mu](0.9e^{i\theta})$ (solid), $P[d\mu_5](0.9e^{i\theta})$ (dashed),
along with bounds based on $\bc_{0:5}$.}}
\label{fig:deg5}
\end{figure}

Figure~\ref{fig:deg20} now compares $d\mu_{20}$ and $d\nu$ in a similar manner. 
The estimation error and the uncertainty diameter are
\[ 
\delta_K(d\nu,d\mu_{20})=0.29,\quad \rho_{\delta_K}(\cF_{{\bf c}_{0:20}})=2.52. 
\]
Here, $d\mu_{20}$ has two peaks close to the spectral lines and
$P[d\mu_{20}](0.9e^{i\theta})$ resembles $P[d\nu](0.9e^{i\theta})$ quite
closely. In fact, the bounds/envelops already reflect the presence of the two peaks.

To amplify the point made above, consider $d\mu_{\rm line}$ to be the (unique) power
spectrum in $\cF_{\bc_{0:20}}$ having Schur parameter
$\gamma_{21}=1$; this corresponds to a deterministic process (having only spectral lines) and is depicted in
Figure~\ref{fig:detspec}.
Subplot $2$ shows $P[d\mu_{\rm line}](0.9e^{i\theta})$ and how it ``sits'' within the respective bounds. In the absence of additional information, $d\mu_{\rm line}$, $d\mu_{20}$, or any other power spectrum in $\cF_{\bc_{0:20}}$ is admissible. The ``worst case'' distance between any two is the diameter computed above.

\begin{remark} Even though the three spectra $d\nu$, $d\mu_{20}$, and $d\mu_{\rm line}$, have identical covariances $\bc_{0:20}$, they are quite
different in terms of their respective singular and continuous
parts. However, they are similar in their distribution of spectral-mass -- they have most of their mass located around the frequency points $\theta=0.5$ and $\theta=1$, and this is what the weak topology captures.
\end{remark}

\begin{remark} Standard pointwise distances between $d\nu$, $d\mu_{20}$, and $d\mu_{\rm line}$ do not provide a meaningful comparison. For instance, the Itakura-Saito distance \cite{GrayBuzoGray1980}, the Kullback-Leibler divergence \cite{GL}, and the Cepstral distance \cite{GrayMarkel1976}, because they contain a logarithmic term, give the value of $\infty$ when comparing $d\mu_{20}$ and $d\mu_{\rm line}$. On the other hand, the $L_2$ metric does not apply to the present context because spectral lines cannot be viewed as ``$L_2$ functions'' and if approximated the norm diverges to infinity. Finally, the total variation 
does not differentiate when spectral lines are nearby or far apart
(c.f., Remark~\ref{rm:TV}).
\end{remark}

\begin{figure}
\centering
\includegraphics[scale=0.45]{./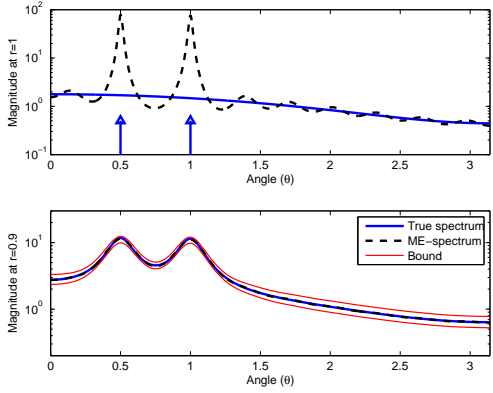}
\caption{{Subplot 1: Power spectrum $d\nu$ (solid), $d\mu_{20}$ (dashed). 
 Subplot~2:  $P[d\mu](0.9e^{i\theta})$ for the true spectrum (solid), $P[d\mu_{20}](0.9e^{i\theta})$ for $d\mu_{20}$ (dashed),
along with bounds based on $\bc_{0:20}$.}}
\label{fig:deg20}
\end{figure}

\begin{figure}
\centering
\includegraphics[scale=0.45]{./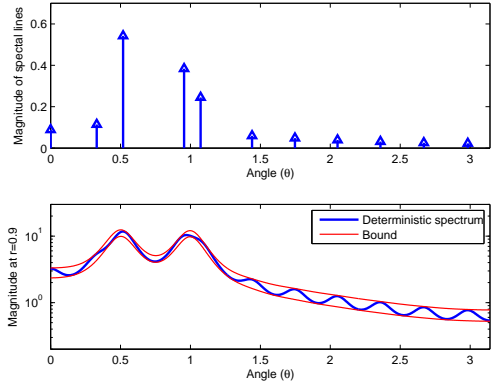}
\caption{{Subplot 1: Power spectrum $d\mu_{\rm line}$ (line spectrum).  Subplot~2: $P[d\mu_{\rm line}](0.9e^{i\theta})$ for the line spectrum, along with bounds based on $\bc_{0:20}$.}}
\label{fig:detspec}
\end{figure}

\section{Generalized statistics}\label{sec:generalized}

Our analysis extends readily to the case of generalized statistics \cite{BGL2,Geor00,BGL1,Geo2001}.
The formalism in these references, nicknamed THREE (for ``tunable high resolution estimation'') allows for the possibility of tunable filter-banks and was shown to provide improved resolution, albeit, quantitative assessments of the benefits exist only in special cases \cite{amini}. We briefly sketch the formalism here, for lack of space, and we refer to the aforementioned references for more detailed accounts.

\begin{figure}[htb]
\begin{center}
\setlength{\unitlength}{.01in}
\parbox{304\unitlength}
{\begin{picture}(304,150)
\thicklines
\put(120,105){\framebox(50,30){$G_0(z)$}}
\put(120,65){\framebox(50,30){$G_1(z)$}}
\put(120,15){\framebox(50,30){$G_n(z)$}}

\put(90,120){\vector(1,0){30}}
\put(90,80){\vector(1,0){30}}
\put(90,30){\vector(1,0){30}}

\put(90,30){\line(0,1){90}}
\put(80,75){\line(1,0){10}}

\put(170,120){\vector(1,0){30}}
\put(170,80){\vector(1,0){30}}
\put(170,30){\vector(1,0){30}}

\put(145,60){\circle{1}}
\put(145,55){\circle{1}}
\put(145,50){\circle{1}}

\put(185,130){\makebox(0,0){{\large $u_0$}}}
\put(185,90){\makebox(0,0){{\large $u_1$}}}
\put(185,40){\makebox(0,0){{\large $u_n$}}}

\put(70,75){\makebox(0,0){{\large $y$}}}
\end{picture}
}
\end{center}
\caption{Bank of filters.}\label{filterbank}
\end{figure}
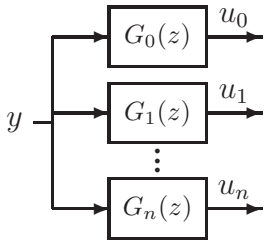

We explain the formalism of generalized statistics in the setting of ``filter-banks'', i.e., we consider the stochastic process $y_t$ as driving a bank of first-order dynamical systems with transfer functions
\[
G_k(z):=\frac{z}{z-z_k},\mbox{ for } k=0,1,\ldots,n, \mbox{ with }|z_k|<1
\]
as shown in Figure \ref{filterbank}.
The joint covariance matrix of the filter-bank outputs is
\[
P=E\{{\bf u}(t){\bf u}(t)^*\},
\]
where ${\bf u}:=(u_0(t),\,u_1(t),\,\ldots,u_n(t))^T$ . As indicated earlier $t\in\mZ$ is the time index.
The covariance matrix takes the form of a Pick matrix
\begin{equation}\label{eq:pick}
P:=\left[\frac{w_k+\bar w_\ell}{1-z_k\bar z_\ell}\right]_{k,\ell=0}^n
\end{equation}
where
\[
w_k=\frac{1}{2}(1-z_k^2)\cE\{u_k^2\}
\]
(see \cite[Equations (2.8), (2.10)]{BGL2} and \cite[page 783, Equation (7)]{Geor00}).
The matrix $P$ replaces the ordinary Toeplitz covariance in the previous sections.
Certain observations are in place: given the filter-bank dynamics, i.e.,  the $z_k$'s,  i) $P$ depends only on the values $w_k$, and ii) the cross-covariances between filter-bank elements can be computed from the output covariances of all elements individually, that is, from the $w_k$'s.

A rather complete theory has been developed to characterize power spectra for the input process that are consistent with output-covariance (more generally, state-covariance) statistics. This theory provides among other things a construction of the unique input spectrum of maximal entropy, spectral envelops that are reminiscent of the Capon pseudo-spectra, and the identification of spectral lines with techniques analogous to the theory of the Pisarenko Harmonic Decomposition, MUSIC, ESPRIT, etc., and has been worked out in detail for matrix-valued power spectra as well  (see e.g., \cite{Geor00,statecov2,CFP,IT, RamponiFerrantePavon2009,RamponiFerrantePavon2010}).

We restrict our attention to the present setting where $\{y_t\}_{t\in\mZ}$ is scalar as before and so are the filters.  We assume estimates for the output covariances, hence, the values $w_k$'s. Like before, we now denote by $\cF_{\bf z,w}$ the family of power spectra for the process $\{y_t\}_{t\in\mZ}$ which are consistent with these values and we are interested in assessing the size of this family as a measure of our spectral uncertainty.

The following proposition can be derived almost verbatim as Proposition \ref{prp:supKnorm}. See \cite{KarlssonGeorgiou2012} for an independent proof.

\begin{prop} \label{prp:supKnormNP}
Let $z_0,\ldots, z_n$ and $w_0,\ldots, w_n$ be such that the Pick matrix $P$ in \eqref{eq:pick} is positive and let $K\subset \mD$ be closed. Then
\begin{align*}
&\rho_{\delta_K}(\cF_{\bf z,w})=\\
&\max_{z\in K}\left\{2\left(\left|\frac{\frac{1}{1-z\bar{z}}+\langle b_z,d_z\rangle_{P^{-1}}}{\langle b_z,b_z\rangle_{P^{-1}}}\right|^2-\frac{\langle d_z,d_z\rangle_{P^{-1}}}{\langle b_z,b_z\rangle_{P^{-1}}}\right)^{\frac{1}{2}}\right\},
\end{align*}
where
\[
b_z=\left(\begin{array}{l}
\frac{1}{1-z_0\bar z}\\
\frac{1}{1-z_1\bar z}\\
\vdots\\
\frac{1}{1-z_n\bar z}
\end{array}\right),\,
d_z=-\left(\begin{array}{l}
\frac{w_0}{1-z_0\bar z}\\
\frac{w_1}{1-z_1\bar z}\\
\vdots\\
\frac{w_n}{1-z_n\bar z}
\end{array}\right),
\]
and $\langle x,y\rangle_{P^{-1}}$ denote the inner product 
\[
\langle x,y\rangle_{P^{-1}}:=y^* P^{-1} x.
\]  As before,
$\rho_{\delta_K}(\cF_{\bf z,w})$ is attained as the distance
between two elements of $\cF_{\bf z,w}$ which are both singular
with support containing at most $n+1$ points.
\end{prop}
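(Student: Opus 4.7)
The plan is to adapt the proof of Proposition~\ref{prp:supKnorm} to the generalized--statistics setting, exploiting the parallel between the trigonometric moment problem (with its Toeplitz structure) and the Nevanlinna--Pick interpolation problem (with its Pick structure). First I would interchange the two suprema in the definition of $\rho_{\delta_K}$ to get
\[
\rho_{\delta_K}(\cF_{\bf z,w})=\max_{z\in K}\sup_{d\mu_0,d\mu_1\in\cF_{\bf z,w}}\bigl|P[d\mu_0](z)-P[d\mu_1](z)\bigr|,
\]
which reduces the problem, for each fixed $z\in K$, to computing the length of the real interval
\[
I(z)=\{\,\mathrm{Re}\,H[d\mu](z)\,:\,d\mu\in\cF_{\bf z,w}\,\},
\]
where $H[d\mu]$ is the Carath\'eodory function from~\eqref{eq:H}.

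Next I would use the Nevanlinna--Pick parametrization of the feasible set. The state-covariance constraint $d\mu\in\cF_{\bf z,w}$ is equivalent to the interpolation conditions $H[d\mu](z_k)=w_k$, which is exactly what produces the Pick matrix~\eqref{eq:pick}. Asking in addition that $H(z)=\zeta$ at the auxiliary point $z$ corresponds, by the classical Nevanlinna--Pick criterion, to non-negativity of the enlarged Pick matrix
\[
P_{\mathrm{ext}}(\zeta)=\begin{pmatrix}P & v(\zeta)\\ v(\zeta)^\ast & \dfrac{\zeta+\bar\zeta}{1-z\bar z}\end{pmatrix},\quad v(\zeta)_k=\frac{w_k+\bar\zeta}{1-z_k\bar z}.
\]
Taking the Schur complement gives $v(\zeta)^\ast P^{-1}v(\zeta)\le (\zeta+\bar\zeta)/(1-z\bar z)$, and writing $v(\zeta)=d_z+\bar\zeta\,b_z$ with the $b_z,d_z$ of the statement reveals this to be a closed disk in the complex $\zeta$-plane. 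Its horizontal projection is precisely the interval $I(z)$, and a short algebraic manipulation of the center and radius of this disk recovers the claimed formula $2\bigl(|\cdots|^2-\cdots\bigr)^{1/2}$.

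For the final claim that the diameter is realized by two singular measures supported on at most $n+1$ points, I would invoke the classical fact from Nevanlinna--Pick theory that every boundary value of the admissible disk at $z$ corresponds to a \emph{unique} interpolant, which is a rational Carath\'eodory function of McMillan degree $n+1$. Its representing measure is accordingly a sum of at most $n+1$ point masses, putting it in $\ext(\cF_{\bf z,w})$. Choosing the two boundary values of $\zeta$ whose real parts attain the endpoints of $I(z)$ furnishes the two extremal measures at which the diameter is attained.

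The main obstacle, and the only substantive difference from the Toeplitz case, is the algebraic bookkeeping in passing from the Schur-complement inequality to the explicit radius formula; the factor $2/(1-z\bar z)$ of the covariance case is replaced by $1/(1-z\bar z)$ here because each condition $H(z_k)=w_k$ fixes $H$ at one point of $\mD$, whereas a real covariance $c_k$ simultaneously fixes two Taylor coefficients of $H$ at the origin. With this single substitution, and the replacement of the monomial vector $(z^{-1},\ldots,z^{-n-1})^\top$ by the Cauchy-kernel vector $\bigl((1-z_k\bar z)^{-1}\bigr)_k$, the rest of the derivation proceeds line for line as in Proposition~\ref{prp:supKnorm}.
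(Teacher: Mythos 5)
Your proposal follows the same route as the paper, which derives Proposition~\ref{prp:supKnormNP} ``almost verbatim'' from Proposition~\ref{prp:supKnorm}: enlarge the Pick matrix with the extra interpolation constraint $H(z)=\zeta$, take the Schur complement to obtain a disk in the $\zeta$-plane whose horizontal extent is the desired quantity, and identify boundary points with uniquely determined singular interpolants supported on at most $n+1$ points. The only slip is notational: with $d_z$ carrying the minus sign as in the statement, the off-diagonal column of your enlarged Pick matrix should read $-d_z+\bar\zeta\,b_z$ rather than $d_z+\bar\zeta\,b_z$.
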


As in the covariance case {\it a priori} bounds on the uncertainty may be calculated.
\begin{thm}\label{thm:THREEbound}
Following the notation of Proposition \ref{prp:supKnormNP}, let $z_0=0$ and
\[
B_{\bf z}(z)=\prod_{k=0}^n \frac{z-z_k}{1-\bar z_k z}.
\]
Then 
\begin{equation}\label{eq:THREEbound}
\rho_{\delta_K}(\cF_{\bf z,w})\le \max_{z\in K}\frac{4w_0|B_{\bf z}(z)|}{1-|z|^2}.
\end{equation}
Further, \eqref{eq:THREEbound} holds with equality if and only if 
\[
w_k=w_0(1+z_k\bar \alpha)/(1-z_k\bar \alpha) \mbox{ for }k=1,\ldots, n
\]
for some $\alpha\in K$ maximizing
$|B_{\bf z}(\alpha)|/(1-|\alpha|^2)$.
\end{thm}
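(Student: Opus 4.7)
The plan is to combine the explicit formula from Proposition \ref{prp:supKnormNP} with the Schur--Nevanlinna parametrization of interpolating Carath\'eodory functions. First I would reinterpret the quantity under the square root: adjoining an auxiliary interpolation node at $z$ with unknown value $w$ and requiring the bordered Pick matrix to be nonnegative gives, via a Schur complement calculation, that the set of admissible values $w=H[d\mu](z)$ for $d\mu\in\cF_{\bf z,w}$ is a closed disk with centre $\bar A(z)/B(z)$ and radius $R(z)=\sqrt{|A(z)|^2-B(z)C(z)}/B(z)$. Since $P[d\mu](z)=\mathrm{Re}\,H[d\mu](z)$, the projection of this disk onto the real axis has length $2R(z)$, and Proposition \ref{prp:supKnormNP} amounts to $\rho_{\delta_K}(\cF_{\bf z,w})=\max_{z\in K}2R(z)$. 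It therefore suffices to establish $R(z)\le 2w_0|B_{\bf z}(z)|/(1-|z|^2)$ for every $z\in K$.

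For the base case $n=0$ I would argue directly. Because $z_0=0$, the sole constraint is $(2\pi)^{-1}\int d\mu=w_0$, so writing $H[d\mu](z)$ as the convex combination $w_0\int_{\mT}(e^{i\theta}+z)/(e^{i\theta}-z)\,d\mu/(2\pi w_0)$ exhibits the admissible values as the convex hull of the circle traced by $\theta\mapsto w_0(e^{i\theta}+z)/(e^{i\theta}-z)$. A short computation identifies this image circle as having centre $w_0(1+|z|^2)/(1-|z|^2)$ and radius $2w_0|z|/(1-|z|^2)$, which is precisely the required bound since $B_{\bf z}(z)=z$ when $n=0$.

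For $n\ge 1$ I would invoke the Schur--Nevanlinna recursion: every Carath\'eodory function satisfying the interpolation conditions can be written as a linear fractional transformation $f(z)=(\alpha(z)+\beta(z)S(z))/(\gamma(z)+\delta(z)S(z))$ in a free Schur function $S$, where the coefficient matrix $\Theta(z)=\bigl(\begin{smallmatrix}\alpha&\beta\\\gamma&\delta\end{smallmatrix}\bigr)(z)$ is $J$-inner with $\det\Theta(z)$ proportional to $B_{\bf z}(z)$. For fixed $z$ the image of $\{|S|\le 1\}$ under this M\"obius transformation is precisely the admissible disk for $f(z)$, whose radius factors as $|B_{\bf z}(z)|$ times the reciprocal of the quadratic form $|\gamma(z)|^2-|\delta(z)|^2$. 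I would then run an induction on the number of Schur steps, using the base case above to seed the estimate $|\gamma(z)|^2-|\delta(z)|^2\ge(1-|z|^2)/(2w_0)$ and verifying that the $J$-inner elementary Blaschke--Potapov factor introduced at each step preserves this lower bound. Combining yields the desired $R(z)\le 2w_0|B_{\bf z}(z)|/(1-|z|^2)$.

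The main obstacle will be precisely this constant-tracking through the recursion: the naive containment $\cF_{\bf z,w}\subset\cF_{\{(z_0,w_0)\}}$ immediately yields the correct prefactor $2w_0/(1-|z|^2)$ but produces the weaker factor $|z|$ in place of $|B_{\bf z}(z)|$, so one really needs the sharper parametrization to extract the full Blaschke product. For the equality statement, equality in the Schwarz--Pick step at $z=\alpha$ forces the auxiliary Schur function $S$ to be a unimodular constant, and reversing the recursion shows that the only interpolating function attaining the extremal boundary is the rational Carath\'eodory function $f_\alpha(z)=w_0(1+z\bar\alpha)/(1-z\bar\alpha)$. Evaluating at the nodes yields the stated identity $w_k=w_0(1+z_k\bar\alpha)/(1-z_k\bar\alpha)$, and the point $\alpha\in K$ must in addition maximize $|B_{\bf z}(\alpha)|/(1-|\alpha|^2)$ over $K$ for the outer maximum over $z$ to be attained at $\alpha$, which closes the characterization.
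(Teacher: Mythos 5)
Your proof of the upper bound follows essentially the same route as the paper's: pass to the Schur/Nevanlinna--Pick linear fractional parametrization of interpolants, write the radius of the Nevanlinna disc at $\alpha$ in terms of the coefficient matrix (radius $= |\det\Theta|$ divided by the gap $|\gamma|^2-|\delta|^2$ in the second row), do $n=0$ directly to get $2w_0|\alpha|/(1-|\alpha|^2)$, and then induct over elementary Schur steps to pick up the Blaschke factor $|\xi_k(\alpha)|$ at each step. The paper phrases this through orthogonal polynomials and explicit M\"obius compositions while you use $J$-inner/Blaschke--Potapov language, but these are the same computation; in particular the key inductive estimate reduces to a single positivity identity (when normalized by $\det\Theta$), so that part of your plan does go through.

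The equality characterization, however, contains a genuine gap. You argue that equality forces the auxiliary Schur parameter $S=s_{n+1}$ to be a unimodular constant and that $f_\alpha(z)=w_0(1+z\bar\alpha)/(1-z\bar\alpha)$ is ``the only interpolating function attaining the extremal boundary.'' That is not the right object to constrain. Equality in \eqref{eq:THREEbound} is a statement about which interpolation data $(w_0,\dots,w_n)$ makes the diameter of $\cF_{\bf z,w}$ saturate the a priori bound, i.e., a condition on the Schur parameter chain of the data, not on any particular solution $f\in\cF_{\bf z,w}$. The correct constraint, coming from forcing equality in every inductive shrinkage step $r_k\le r_{k-1}|\xi_k(\alpha)|$, is that the Schur parameters equal $(\bar\alpha,0,\dots,0)$. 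For that data, $f_\alpha$ is the central solution (corresponding to $S\equiv 0$, not $|S|\equiv 1$), and evaluating it at the nodes yields the stated $w_k$'s. Moreover $f_\alpha$ corresponds to an absolutely continuous (Poisson-kernel) measure and is the center of the Nevanlinna disc, not an extremal boundary point of $\cF_{\bf z,w}$; so the claim that it ``attains the extremal boundary'' is false. Your final formula for the $w_k$'s is correct, but the proposed mechanism for deriving it would not work and needs to be replaced by the step-by-step equality condition on the Schur parameters, as in the paper.
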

\begin{proof}
The proof is given in the appendix.
\end{proof}
Here the {\it a priori} bound depends on the interpolation points ${\bf z}$, in addition to $K$, the model order $n$, and the total spectral mass $w_0$. Therefore, by minimizing the right hand side of \eqref{eq:THREEbound} with respect to the ${\bf z}$, one can find the filter bank with the smallest {\it a priori} uncertainty in the metric $\delta_K$. This will be exploited in the following example to tune the filter-bank poles.

\section{Uncertainty in the THREE framework with optimal filter selection}\label{sec:THREEex}
From this vantage point we now take up an example as before, with closely spaced sinusoids, and compare two alternative formalisms, one based on Toeplitz covariances and
the other based on generalized statistics.

Consider the stochastic process
\[
y_t=\frac{\cos(0.5t+\varphi_1)+\cos(0.6t+\varphi_2)}{2}+\cos(t+\varphi_3)+w_t+\frac{1}{3}w_{t-1},
\]
with two closely-spaced spectral lines at $0.5$ rad/s and $0.6$ rad/s superimposed with a spectral line in $1$ and colored noise.
We choose as metric $\delta_K$, with $K\subset \mD$ proximal to
the region where high resolution is desired -- i.e., near $0.5$ rad/s where the two closely-spaced sinusoids reside. More specifically, we take\footnote{$\mT$ denotes as before the unit circle.}
\[
K:=\{0.65e^{\pm 0.5i}+0.25\mT\}.
\]
This is depicted by the two circles in Figure \ref{fig:ex2circle}.

We compare the maximum entropy
spectral estimate $d\mu_{\rm ME}$ constructed using the covariances $c_0, c_1,
\ldots, c_{20}$, with the spectral
estimate  $d\mu_{_{\rm THREE}}$ which is based on the output statistics of the  filter bank of $G_k(z)$'s. We select $n=10$ and filter-bank poles that minimize\footnote{The pole $z_0=0$ and total spectral mass $w_0=1$ are assumed fixed. The bound in \eqref{eq:THREEbound} is then minimized over $z_1,\ldots, z_n$. Since RHS of \eqref{eq:THREEbound} is nonconvex in $\bf z$ only local minimum is guaranteed.} the {\it a priori} uncertainty bound \eqref{eq:THREEbound}. The filter poles, indicated by ``$\times$'' in Figure~\ref{fig:ex2circle}, are
 \begin{align*}
z_k \in&\{0,\; 0.581 \pm 0.480i,\; 0.681 \pm 0.470i,   \\& 0.738 \pm 0.422i, \; 0.755 \pm 0.271i, \;0.765 \pm 0.357i \}.
 \end{align*}
The THREE-spectrum is a ``maximum entropy'' distribution which is now consistent with statistics other than the usual autocorrelation ones ($d\mu_{_{\rm THREE}}$ is the so called ``central solution'' of the Nevanlinna-Pick analytic interpolation theory\footnote{Software is available at\\ {\tt http://www.ece.umn.edu/$\sim$georgiou/code/spec\_analysis.tar}} to distributions in $\cF_{\bf z,w}$).

The {\it a priori} bounds on the uncertainty provided by Theorems \ref{thm:COVbound} and \ref{thm:THREEbound} are
\begin{align*}
\rho_{\delta_K}(\cF_{\bf z,w})&\le 0.151w_0= 0.468 \mbox{ and}\\
\rho_{\delta_K}(\cF_{{\bf c}_{0:n}})&\le 2.304\,c_0\,= 7.167,
\end{align*}
respectively. In our example $w_0=c_0=28/9$.
This shows that the {\it a priori} bound  
on the uncertainty set with respect to $\delta_K$ is considerably smaller when the
THREE formalism is applied.

The two spectral estimates together with the true power spectrum are depicted in Figure
\ref{fig:ex2spectra}. It can be seen that the two closely-spaced lines
are not discernible in $d_{\rm ME}$. On the other hand, they are quite clearly distinguishable via THREE. This is due to the choice of the dynamics {\bf z}. As can be seen from the figure, the resolution of $d\mu_{_{\rm THREE}}$ is substantially higher than that of $d\mu_{\rm ME}$ in the vicinity of $0.5$ rad/s.
We would also like to compare the size of the uncertainty set for the two scenarios. The size of the respective diameters are
\[
\rho_{\delta_K}(\cF_{\bf z,w})= 0.194 \mbox{ and }
\rho_{\delta_K}(\cF_{{\bf c}_{0:n}})= 2.831.
\]
Thus, when measured using $\delta_K$, the uncertainty set using the THREE formalism
 is considerably smaller.
Figure \ref{fig:ex2bounds} displays the Poisson integral of the true power spectrum evaluate on $K$, and the corresponding bounds\footnote{Since $K$ is formed out of two circles symmetrically located with respect to the real axis of the complex plane, plots are identical for the two components of $K$.}.

\begin{figure}
\centering
\includegraphics[scale=0.45]{./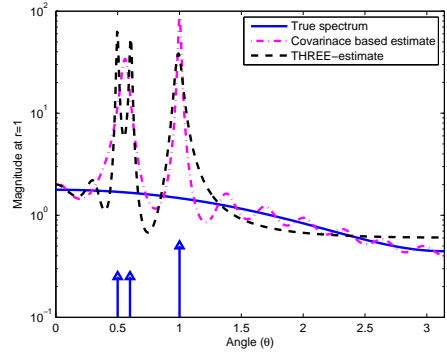}
\caption{{True spectrum $d\nu$ (solid) and estimated spectra $d\mu_{_{\rm THREE}}$ (dashed-dotted) and $d\mu_{\rm ME}$ (dashed).}}
\label{fig:ex2spectra}
\end{figure}

\begin{figure}
\centering
\includegraphics[scale=0.45]{./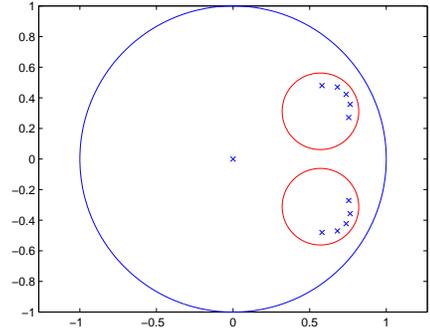}
\caption{{Set $K$ (solid red) and points $z_k$ ($\times$ in blue).}}
\label{fig:ex2circle}
\end{figure}

\begin{figure}
\centering
\includegraphics[scale=0.45]{./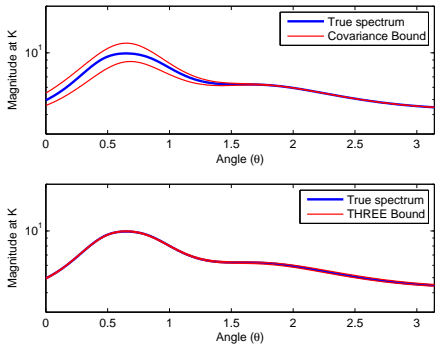}
\caption{{Bounds on estimates on $K$ based on covariances (top) and the THREE formalism (bottom), respectively. 
}}
\label{fig:ex2bounds}
\end{figure}

\section{Conclusions and future directions}\label{sec:conclutions}
The choice of a metric is key to any quantitative scientific theory.
Identification of power spectra is often based on second-order statistics (moments), and therefore, it is natural to metrize the space of power spectra
in a way that respects continuity of moments.
There is a variety of such weakly continuous metrics ---metrics which localize ``spectral
mass''. We presented various choices and focused on a particular metric, $\delta_K$, which is
amenable to quantifying the size of the uncertainty set.
We envision that this, and similar metrics, can be used as tools for assessing uncertainty and robustness in modeling and spectral analysis.
We further expect that the theory will be of use in filter design and in quantifying the notion of resolution--as this is naturally connected to the size of the spectral uncertainty set. Finally, we expect that these metrics will conform with other subjective measures rooted in perceptual qualities of signals (cf.\ \cite[Example 10]{GKT}).

Interest in weak continuity is not new.
Indeed, a classical weakly continuous metric is the L\'{e}vy-Prokhorov metric \cite{Prokhorov} and it is well known that the periodogram
converges weakly as the sample size goes to infinity (see, e.g., \cite{Parzen}). 
Yet, appropriate weakly continuous metrics that can be used to quantify uncertainty have not received much attention ---the commonly used ``total variation,'' Itakura-Saito, and other distance measures are not weakly continuous.
Besides the relevance in uncertainty quantification and in filter design (cf.\ Section \ref{sec:THREEex}), 
computationally amenable and easy-to-use metrics may provide a useful geometric setting for
modeling slowly time-varying processes and for integrating data from disparate sources (see, e.g., \cite{Jiang,Jiang1,Jiang2,rudoy,rudoy2}).

\section*{Appendix}
\begin{proof}{\em [Theorem \ref{thm:weakstardistance}]}\\
The canonical neighborhood basis for a point $d\nu$ in the weak topology on $\fM$ consists of sets of the type
\begin{eqnarray*}
&&N(d\nu,\{g_k\}_{k=1}^n,\epsilon)\\
&&\hspace{20pt}=\Big\{d\mu\ge 0\,:\, \left|\int_\mT g_k(d\nu-d\mu)\right|<\epsilon, k=0,1,\dots,n\Big\},
\end{eqnarray*}
where $g_k$ are continuous functions on $\mT$ for $k=0,\dots,n$. 
To establish the theorem we prove that the neighbourhood basis
\[
\fN(d\nu)=\left\{N(d\nu,\{g_k\}_{k=0}^n,\epsilon)\,:\, \epsilon>0, n\in \mN, \{g_k\}_{k=0}^n\subset C(\mT)\right\}
\]
is equivalent to the basis 
\[
\fF(d\nu)=\left\{\cF_{\bczn,\epsilon} \,:\, \epsilon>0, n\in \mN,   c_k=\int_\mT z^{-k} d\nu, k=0,\dots,n\right\}.
\]
First note that $\fN(d\nu) \supset\fF(d\nu)$, and hence the weak topology is at least as strong as the topology induced by $\fF(d\nu)$.
To establish the other direction, let
$N$ be an arbitrary set in $\fN(d\nu)$. To show the equivalence, it is enough to show that 
there exists $n\in \mN$ such that 
$\cF_{\bc_{0:n},n^{-1}} \subset N.$

Let $\delta$ be a weakly continuous metric
for $\fM$ and choose $\epsilon$ so that the $B_\delta(d\nu, \epsilon)=\{d\mu\ge 0\,:\, \delta(d\mu, d\nu)<\epsilon\}\subset N$. Next, take $d\mu_\ell\in \cF_{\bc_{0:\ell},\ell^{-1}}$ with 
\begin{equation}\label{eq:boundmun}
\delta(d\mu_\ell, d\nu)\ge \frac{1}{2}\sup \{\delta(d\mu, d\nu)\,:\, d\mu\in\cF_{\bc_{0:\ell},\ell^{-1}}\}
\end{equation}
for  $\ell\ge 1$.
Since $\cF_{\bc_{0:\ell},\ell^{-1}}\subset \{d\mu\,:\, \mu(\mT)\le\nu(\mT)+1)$, which is weakly compact, there is a convergent subsequence of $d\mu_k$ that converges to $d\hat\mu$ (by Banach-Alaoglu \cite{Rudin1971}).
Note that $\cF_{\bc_{0:\ell},\ell^{-1}}\supset {\rm closure} (\cF_{\bc_{0:\ell+1},(\ell+1)^{-1}})\supset \cF_{\bc_{0:\ell+1},(\ell+1)^{-1}}$, hence $d\hat\mu\in\cF_{\bc_{0:\ell},\ell^{-1}}$ for any $\ell$. It then follows that $d\hat\mu=d\nu$ since the trigonometric moment problem is determinate (by Riesz-Herglotz, see \cite{Akhiezer}). Let $n$ be such that $\delta(d\mu_n,d\nu)<\epsilon/2$, then by \eqref{eq:boundmun} we have that $\cF_{\bc_{0:n},n^{-1}}\subset B_\delta(d\nu, \epsilon)\subset N$.
We have thus shown that the topology induced by the neighbourhood basis $\fF(d\nu)$
is the weak topology, and hence $\delta$ is weakly continuous if and only if \eqref{eq:weaklimit} holds.
\end{proof}

\begin{proof}{\em [Proposition \ref{prp:locmetric}]}\\
It is clear that condition (a) holds if and only if $\delta(d\mu_0,d\mu_1)$ is positive whenever $d\mu_0\neq d\mu_1$. The triangle inequality and symmetry always holds for such $\delta$, so we only need to show that condition (b) holds if and only if $\delta$ is weakly continuous. 

We will show that condition (b) implies that $\delta$ is weakly continuous by contradiction. Assume therefore that condition (b) holds, but that $\delta$ is not weakly continuous. 
Then there exists $d\mu_k\to d\mu$ weakly such that $\delta(d\mu_k, d\mu)>\epsilon$, $k=1,2,\dots$, and hence there exists $g_{\xi_k},{\xi_k}\in K$, such that
\[
\epsilon<\left|\int_\mT g_{\xi_k}(d\mu_k-d\mu)\right|,\; k=1,2,\dots.
\]
To this end we use the Arzel\`a-Ascoli theorem (see e.g., \cite[page 102]{kolmogorov}) which states that a set of functions is relatively compact\footnote{A set is relatively compact if its closure is compact.} in $C(\mT)$
if and only if the set of functions is uniformly bounded and equicontinuous.
Therefore, since (b) holds, the set $\{g_\xi\}_{\xi\in K}$ is relatively compact in $C(\mT)$, and there is a subsequence $(g_\ell,d\mu_\ell)$ 
of $(g_{\xi_k},d\mu_k)$ such that $g_\ell\to g\in C(\mT)$. A contradiction follows, since
\begin{eqnarray*}
\epsilon&<&\left|\int_\mT g_\ell(d\mu_\ell-d\mu)\right|\\
&\le&\|g_\ell-g\|_\infty\int_\mT|d\mu_\ell-d\mu|+\left|\int_\mT g(d\mu_\ell-d\mu)\right|\\
&\to& 0 \mbox{ as } \ell\to \infty,
\end{eqnarray*}
and hence $\delta$ is weakly continuous whenever condition (b) holds.

Next, we show that (b) holds if $\delta$ is weakly continuous, and once again we use contradiction. That is, we show that 
 if (b) fails to be true then $\delta$ is not weakly continuous. If $\{g_\xi\}_{\xi\in K}$ is not equicontinuous, 
then there exists an $\epsilon>0$ such that for any $k=1,2,\ldots$ one can find $\theta_k,\phi_k\in \mT$, and $\xi_k\in K$, that satisfies
\begin{equation}\label{eq:equicont}
|\theta_k-\phi_k|<\frac{1}{k} \mbox{ and } |g_{\xi_k}(\theta_k)-g_{\xi_k}(\phi_k)|>\epsilon.
\end{equation}
Let $(\theta_\ell,\phi_\ell)$ be a subsequence of $(\theta_k,\phi_k)$ such that $\theta_\ell\to\theta_0\in\mT$ as $\ell \to \infty$, and let $d\mu_\ell$ and $d\nu_\ell$ be the measures that consist of a unit mass in $\theta_\ell$ and $\phi_\ell$, respectively. From \eqref{eq:equicont}  it follows that $\phi_\ell\to\theta_0$, and hence that $d\mu_\ell\to d\mu_0$ and $d\nu_\ell\to d\mu_0$ weakly, where $d\mu_0$ is the measure that consists of a unit mass in $\theta_0$. From \eqref{eq:equicont} it follows that
\begin{eqnarray*}
\delta(d\mu_\ell, d\mu_0)+\delta(d\nu_\ell, d\mu_0)&\ge&\delta(d\mu_\ell, d\nu_\ell)\\
&\ge&|g_{\xi_\ell}(\theta_\ell)-g_{\xi_\ell}(\phi_\ell)|>\epsilon.
\end{eqnarray*}
From this, it is evident that $\delta$ is not weakly continuous since  both $\delta(d\mu_\ell, d\mu_0)$ 
and $\delta(d\nu_\ell, d\mu_0)$ cannot converge to $0$.

Similarly, if $\{g_\xi\}_{\xi\in K}$ is not uniformly bounded, 
then for any $k=1,2,\ldots$ one can find $\theta_k\in \mT$ and $\xi_k\in K$ such that
\begin{equation}\label{eq:ubound}
|g_{\xi_k}(\theta_k)|>k.
\end{equation}
Let $d\mu_k$ be the measures that consist of a unit mass in $\theta_k$. Therefore, the metric $\delta$ is not weakly continuous since $\frac{1}{k}d\mu_k \to 0$ weakly, while $\delta(\frac{1}{k}d\mu_k, 0)>1$ for all $k$.
\end{proof}

\begin{proof}{\em [Proposition \ref{prp:eqlimits}]}\\
 $(a) \Rightarrow (b)$
$\mu_k\rightarrow \mu$ weakly is equivalent to $\int_{-\pi}^\pi f(t)d\mu_k(t)\rightarrow \int_{-\pi}^\pi f(t)d\mu(t)$ for all periodic continuous functions $f(t)$. For all $z=re^{i\theta}\in\mD$, $P_r(\theta-t)$ is periodic and continuous, hence
\begin{eqnarray*}
u_k(z)&=&\frac{1}{2\pi}\int_{-\pi}^\pi P_r(\theta-t)d\mu_k(t)\\ 
&\rightarrow& \frac{1}{2\pi}\int_{-\pi}^\pi P_r(\theta-t)d\mu(t)=u(z).
\end{eqnarray*}

$(b) \Rightarrow (c)$. 
For $r<1$, $|u_k(re^{i\theta})|\le \frac{1+r}{1-r}|\mu_k|(\mT)$. 
Since $u_k(re^{i\theta})\to u(re^{i\theta})$ pointwise for all $\theta$, it follows from bounded convergence that $\int_{\pi}^\pi|u_k(re^{i\theta})- u(re^{i\theta})|d\theta\to 0$. 
Further more, $\forall k,r$, $\int_{\pi}^\pi|u_k(re^{i\theta})- u(re^{i\theta})|d\theta \le 2\pi(|\mu_k|(\mT)+|\mu|(\mT))$ which is uniformly bounded, hence
\[
\int_0^1\int_{-\pi}^\pi |u_k(re^{i\theta})- u(re^{i\theta})|d\theta rdr\to 0
\]
by dominated convergence.

$(c) \Rightarrow (d)$. Let $K \subset \mD$ be a compact set. Then there exist an $\epsilon > 0$ such that $B_\epsilon(z_0)=\{z:|z-z_0|<\epsilon\}\subset \mD$ for all $z_0\in K$. Now by the mean value property of harmonic functions we have 
\begin{eqnarray*}
u(z_0)&=&\frac{2}{\epsilon^2}\int_0^\epsilon u(z_0) rdr\\
&=&\frac{2}{\epsilon^2}\int_0^\epsilon \frac{1}{2\pi}\int_{\pi}^\pi u(z_0+r e^{i\theta})d\theta rdr\\
&=&\frac{1}{\pi\epsilon^2}\int_{B_\epsilon(z_0)} u(z) dxdy.
\end{eqnarray*}
Of course the same equality holds for $u_k(z_0)$
\[
u_k(z_0)=\frac{1}{\pi\epsilon^2}\int_{B_\epsilon(z_0)} u_k(z) dxdy.
\]
For any $z_0\in K$ the difference between the harmonic functions is bounded by
\begin{align*}
|u_k(z_0)-u(z_0)|&\le\frac{1}{\pi\epsilon^2}\int_{B_\epsilon(z_0)} |u_k(z)-u(z) |dxdy\\
&\le\frac{1}{\pi\epsilon^2}\int_\mD |u_k(z)-u(z) |dxdy.
\end{align*}
By $(c)$ the difference goes to zero uniformly in $K$.

$(d) \Rightarrow (a)$.
Let $f\in C(\mT)$. For any bounded measure $\nu\in\cF$ and corresponding harmonic function $v(z)=P[\nu](z)$ Fubini's theorem gives
\begin{eqnarray}
\int_{-\pi}^\pi f(t) v(re^{it})dt & = & \int_{-\pi}^\pi\frac{1}{2\pi}\int_{-\pi}^\pi P_r(\theta-t)f(t)dt d\nu(\theta) \nonumber \\ & = & \int_{-\pi}^\pi P[f(t)dt](re^{i\theta})d\nu(\theta).\nonumber
\end{eqnarray}
Since $f$ is periodic and continuous, $P[f(t)dt](re^{i\theta})$ converges uniformly to $f(\theta)$, hence 
\begin{align*}\label{eq:uniformlyweak}
&\left|\int_{-\pi}^\pi f(t) v(re^{it})dt - \int_{-\pi}^\pi f(t) d\nu(t)\right|\le\\ 
&\hspace{2.5cm}\|P[f](re^{it})-f(t)\|_\infty|\nu|(\mT)
\end{align*}
 converges to zero independent of the measure $\nu$. This shows that for an arbitrary $\epsilon>0$ there exists an $0<r<1$ such that 
\[
\left|\int_{-\pi}^\pi f(t) v(re^{it})dt - \int_{-\pi}^\pi f(t) d\nu(t)\right|<\frac{\epsilon}{3}
\]
for $\nu\in\{\mu,\mu_1,\mu_2,\ldots\}$. Further more, since $u_k\rightarrow u$ uniformly on $\{z:|z|\le r\}$, it is possible to find an $k_{r,\epsilon}$ be such that 
\[
\left|\int_{-\pi}^\pi f(t) u_k(re^{it})dt - \int_{-\pi}^\pi f(t) u(re^{it})dt\right|<\frac{\epsilon}{3}
\]
 for all $k>k_{r,\epsilon}$.
By the triangle inequality we have 
\[
\left|\int_{-\pi}^\pi f(t)d\mu_k(t)- \int_{-\pi}^\pi f(t)d\mu(t)\right| < \epsilon
\]
for all $k>k_{r,\epsilon}$. Since $\epsilon$ was chosen arbitrarily, $\left|\int_{-\pi}^\pi f(t)d\mu_k(t) - \int_{-\pi}^\pi f(t)d\mu(t)\right|\rightarrow 0$ as $k\rightarrow \infty$, and weak convergence follows.
\end{proof}

\begin{proof}{\em [Proposition \ref{prp:supKnorm}]}\\
There exists an analytic function $f(z)=H[d\mu](z), d\mu\in \cF_{{\bf c}_{0:n}}$, such that 
$f(z)=w_z$ if and only if its associated Pick matrix is nonnegative \cite{KP}, i.e.
\begin{equation}\label{eq:pickmatrix}
\left(\begin{array}{cc}
2T_n & b_zw_z -d_z\\
\bar{w}_zb^*_z-d^*_z& \frac{w_z+\bar{w}_z}{1-z\bar{z}}\end{array}\right)\ge 0.
\end{equation}
By using Schur's lemma and completing the square, we arrive at
\begin{align} 
&\left|w_z-\frac{\frac{2}{1-z\bar{z}}+\langle d_z,b_z\rangle_{T^{-1}}}{\langle b_z,b_z\rangle_{T^{-1}}}\right|^2 \nonumber\\
& \le\left|\frac{\frac{2}{1-z\bar{z}}+\langle b_z,d_z\rangle_{T^{-1}}}{\langle b_z,b_z\rangle_{T^{-1}}}\right|^2-\frac{\langle d_z,d_z\rangle_{T^{-1}}}{\langle b_z,b_z\rangle_{T^{-1}}},\label{eq:discw} 
\end{align}
where equality holds if and only if the Pick matrix (\ref{eq:pickmatrix}) is singular. From this, the first part of Proposition \ref{prp:supKnorm} follows. Since the maximum is obtained when equality holds in (\ref{eq:discw}), the associated Pick matrices are singular. Hence the solutions are unique and correspond to measures with support on $n+1$ points \cite[Proposition 2]{Geor00}.
\end{proof}

\subsection*{Background on orthogonal polynomials and Schur coefficients}
Let $\bc$ be a nonnegative covariance sequence with corresponding measure $d\mu$ and consider the inner product
\[
\langle a(z),b(z)\rangle = \frac{1}{2\pi}\int_\mT a(e^{i\theta})\overline{b(e^{i\theta})}d\mu(\theta).
\]
The so-called {\em orthogonal polynomials (of the first kind)} $\phi_k(z)$ \cite{Ger}
are (uniquely defined) monic polynomials with $\deg \phi_k(z)=k$, $k=0,1\ldots$, which are 
orthogonal with respect to $\langle \cdot,\cdot\rangle$. They are shown \cite{Ger} to satisfy the recursion 
\begin{equation} \label{eq:rec1}
\begin{array}{ccc}
\phi_{k+1}(z) &=& z\phi_k(z)-\bar{\gamma}_k\phi_k(z)^*,\\
\phi_{k+1}(z)^* &=& \phi_k(z)^*-z\gamma_k\phi_k(z), 
\end{array}
\end{equation}
where $\phi_k(z)^*=z^k\overline{\phi_k(\bar{z}^{-1})}$ and $\{\gamma_k\}_{k=1}^\infty$ are the so-called {\em Schur parameters}.

The {\em orthogonal polynomials of the second kind} are defined by 
\[
\psi_k(z)=\frac{1}{c_0}[(\overline{f(\bar{z}^{-1})})\phi_k(z)]_+,
\] 
where $[\cdot]_+$ denote ``the polynomial part of''. They are also ``orthogonal polynomials'' but with respect to a certain ``inverted'' covariance (corresponding to the negative of the original Schur parameters, cf.\ \cite{Ger}) and satisfy the recursion
\begin{equation}\label{eq:rec2}
\begin{array}{ccc}
\psi_{k+1}(z) &=& z\psi_k(z)+\bar{\gamma}_k\psi_k(z)^*,\\
\psi_{k+1}(z)^* &=& \psi_k(z)^*+z\gamma_k\psi_k(z).
\end{array}
\end{equation}

The positive-real function $f(z)=H[d\mu](z)$ may be expressed using the orthogonal polynomials as 
\begin{equation}\label{eq:fparam}
f(z)=c_0\frac{\psi_k(z)^*+zs_{k+1}(z)\psi_k(z)}{\phi_k(z)^*-zs_{k+1}(z)\phi_k(z)},
\end{equation}
where $s_{k+1}(z)$ belong to the Schur class ${\cal S}$, i.e. the class of analytic functions on $\mD$ uniformly bounded by $1$. 
Equations (\ref{eq:rec1}-\ref{eq:rec2}) lead to
\begin{eqnarray}\label{eq:shurrec}
f(z)=c_0\frac{1+zs_1(z)}{1-zs_1(z)},\quad 
s_{k}(z)=\frac{\gamma_{k}+zs_{k+1}(z)}{1+z\bar{\gamma}_{k}s_{k+1}(z)},
\end{eqnarray}
for $k=1,2,\ldots$.
For a complete exposition on orthogonal polynomials and Schur's algorithm see \cite{Akhiezer, Ger, GrenanderSzego}.

\begin{proof}{\em [Theorems \ref{thm:COVbound} and \ref{thm:THREEbound}]}

Following our earlier notation, let 
\[\rho_{\delta_\alpha}(\cF_\bczn)=\max\{|P[d\mu_0](\alpha)-P[d\mu_1](\alpha)|: d\mu_0, d\mu_1\in \cF_\bczn\}
\]
be the uncertainty diameter at the point $\alpha\in K$. 
By using \eqref{eq:fparam} and noting that 
\[
P[d\mu](\alpha)={\mathfrak Re} H[d\mu](\alpha)={\mathfrak Re} f(\alpha),
\]
the diameter $\rho_{\delta_\alpha}(\cF_\bczn)$ is equal to the diameter of the disc 
\begin{equation}\label{eq:circlr}
c_0 \frac{\psi_n(\alpha)^*+
\alpha s_{n+1}(\alpha)\psi_n(\alpha)}{\phi_n(\alpha)^* -\alpha s_{n+1}(\alpha)
\phi_n(\alpha)} \, : \, s_{n+1}(\alpha)\in \overline\mD,
\end{equation}
where $\phi_n$ and $\psi_n$ are specified via \eqref{eq:rec1}, \eqref{eq:rec2},  by the Schur sequence $(\gamma_1, \ldots, \gamma_n)$ corresponding to $\bczn$ (see \cite{GrenanderSzego}). Denote by  $r_n$ the radius of \eqref{eq:circlr}, and hence $\rho_{\delta_\alpha}(\cF_\bczn)=2r_n$.

For $n=0$, we have $\phi_0(z)=\psi_0(z)= 1$. The expression in \eqref{eq:circlr} is
\[
c_0\frac{1+\alpha s_1}{1-\alpha s_1}=c_0\frac{1+|\alpha |^2}{1-|\alpha |^2}+\frac{2c_0\alpha}{1-|\alpha|^2}\frac{-\bar \alpha +s_1}{1-\alpha s_1},
\]
hence $r_0=2c_0\alpha /(1-|\alpha|^2)$, where $s_k$ without argument denotes $s_k(\alpha)$.
Next, consider the radius of \eqref{eq:circlr} for $n={k-1}$. The set \eqref{eq:circlr} is the range of a M\"obius transform applied to $s_k\in\overline \mD$, and may be represented as
\begin{equation}\label{eq:Mrep}
M_{k-1}+e^{i\theta_{k-1}}r_{k-1}\frac{v_{k-1}+s_k}{1+\bar v_{k-1}s_k}
\end{equation} 
where $M_{k-1}$ and $r_{k-1}$ are the center and radius of the disc, respectively, and where $\theta_{k-1}\in(-\pi, \pi]$,  $v_{k-1}\in\mD$.
From the recursion \eqref{eq:shurrec}, is can be seen that
\[
\frac{v_{k-1}+s_k}{1+\bar v_{k-1}s_k}=\frac{1+\bar\gamma_k v_{k-1}}{1+\gamma_k \bar v_{k-1}}\;\frac{\eta_k+\alpha s_{k+1}}{1+\bar \eta_k\alpha s_{k+1}}
,\]
where 
\[
\eta_k=\frac{v_{k-1}+\gamma_{k}}{1+\bar\gamma_{k}v_{k-1}}.
\]
The set \eqref{eq:circlr} for $n=k$ is therefore
\[
M_{k-1}+e^{i\theta_{k-1}} \frac{1+\bar\gamma_k v_{k-1}}{1+\gamma_k \bar v_{k-1}} r_{k-1} \frac{\eta_k+\alpha s_{k+1}}{1+\bar \eta_k\alpha s_{k+1}}\,: \; s_{k+1}\in \mD.
\]
A M\"obius transformation $(a+bs)/(c+ds)$, with $|c|>|d|$, maps the unit disc to a disc of radius $|ac-bd|/(|c^2|-|d|^2)$. 
Therefore, the radius 
\[
r_k=r_{k-1}|\alpha|\frac{1-|\eta_k|^2}{1-|\eta_k|^2|\alpha|^2}=r_{k-1}|\alpha |\left(1-\frac{|\eta_k|^2(1-|\alpha|^2)}{1-|\eta_k|^2|\alpha|^2}\right)
\]
is maximized when $\eta_k=0$, or equivalently when $\gamma_k=-v_{k-1}$. Hence, $r_k\le r_{k-1}|\alpha|$ with equality if $\gamma_k=-v_{k-1}$.
By induction, the maximal radius is given by 
\[r_n=r_0|\alpha|^n=2c_0|\alpha|^{n+1}/(1-|\alpha|^2).
\] Furthermore, $\gamma_k=-v_{k-1}$ in the recursion \eqref{eq:shurrec} correspond to the Schur parameters  
$\gamma_1=\bar \alpha,$ and $\gamma_k=0 \mbox{ for } k=2,\ldots, n.$ 
This leads to the covariance sequence 
$\bczn=(c_0, c_0 \bar \alpha,\ldots, c_0\bar \alpha^n )$.
Since $\delta_K$ is defined as the maximal diameter over all $\alpha\in K$, the inequality
\[
\rho_{\delta_K}(\cF_\bczn)=\max_{\alpha\in K}\rho_{\delta_\alpha}(\cF_\bczn)\le\max_{\alpha\in K}\frac{4|\alpha|^{n+1}c_0}{1-|\alpha|^2}
\]
holds and is achieved for $\bczn=(c_0, c_0 \bar \alpha,\ldots, c_0\bar \alpha^n )$ where $\alpha\in K$  maximizes $|\alpha|^{n+1}/(1-|\alpha|^2)$.

In the Nevalinna-Pick case, the recursion is identical to \eqref{eq:shurrec} except that $z$ is replaced by the inner factor $\xi_k(z)=(z_k-z)/(1-\bar z_k z)
$
\[
f(z)=w_0\frac{1+zs_1(z)}{1-zs_1(z)},\quad 
s_{k}(z)=\frac{\gamma_{k}+\xi_k(z) s_{k+1}(z)}{1+\bar{\gamma}_{k}\xi_k(z)s_{k+1}(z)},
\]
for $k=1,2,\ldots, n$  \cite{Garnett}. The argument here is analogous to the covariance case. The shrinkage of the radius is $r_k\le r_{k-1}|\xi_k(\alpha)|$, with equality when the parameters in the recursion are $(\bar \alpha, 0, \ldots, 0)$, as in the covariance case. The bound of the uncertainty diameter at $\alpha$ then becomes 
\begin{equation}\label{eq:THREEboundAlpha}
2r_n\le\frac{4w_0|\alpha|\prod_{z=1}^n |\xi_k(\alpha)|}{1-|\alpha|^2}=\frac{4w_0|B_{\bf z}(\alpha)|}{1-|\alpha|^2}
\end{equation}
which is attained when $w_k=w_0(1+z_k\bar \alpha)/(1-z_k\bar \alpha)$ for $k=1,\ldots, n$. Since $\rho_{\delta_\alpha}(\cF_{\bf z,w})=2r_n$, maximizing \eqref{eq:THREEboundAlpha} for $\alpha\in K$ gives the bound \eqref{eq:THREEbound}. 
\end{proof}

\end{document}